\def\idtt#1{\ensuremath{\mathtt{#1}}}
\def\rankop{\idtt{rank}}
\def\selop{\idtt{select}}
\def\accessop{\idtt{access}}
\def\insertop{\idtt{insert}}
\def\deleteop{\idtt{delete}}
\def\sumop{\idtt{sum}}
\def\searchop{\idtt{search}}
\def\updateop{\idtt{update}}
\def\polylog{\idtt{polylog}}
\def\etal{{\em et al.}}
\begin{document}

%\title{Succinct Representations of Dynamic Strings\thanks{This work was supported by NSERC of Canada and the Canada Research Chair program.}}
\title{Succinct Representations of Dynamic Strings}
\author{Meng He \and J. Ian Munro}
\institute{
  Cheriton School of Computer Science,
  University of Waterloo, Canada, \\
  \email \{mhe, imunro\}@uwaterloo.ca
}

\maketitle

\setcounter{page}{1}
\pagestyle{plain}
\thispagestyle{plain}

\begin{abstract}
The $\rankop$ and $\selop$ operations over a string of length $n$ from an alphabet of size $\sigma$ have been used widely in the design of succinct data structures. 
In many applications, the string itself must be maintained dynamically, allowing characters of the string to be inserted and deleted. 
Under the word RAM model with word size $w=\Omega(\lg n)$, we design a succinct representation of dynamic strings using $nH_0 + o(n)\cdot\lg\sigma + O(w)$ bits to support $\rankop$, $\selop$, $\insertop$ and $\deleteop$ in $O(\frac{\lg n}{\lg\lg n}(\frac{\lg \sigma}{\lg\lg n}+1))$ time\footnote{$\lg n$ denotes $\log_2 n$.}. 
When the alphabet size is small, i.e. when $\sigma = O(\polylog (n))$, including the case in which the string is a bit vector, these operations are supported in $O(\frac{\lg n}{\lg\lg n})$ time. %, while the same space bound can be achieved (when the string is a bit vector, the space is $nH_0 + o(n) +O(w)$ bits). 
Our data structures are more efficient than previous results on the same problem, and we have applied them to improve results on the design and construction of space-efficient text indexes. 
\end{abstract}

\section{Introduction}
\label{sec:intro}

Succinct data structures provide solutions to reduce the storage cost of modern applications that process large data sets, such as web search engines, geographic information systems, and bioinformatics applications. 
First proposed by Jacobson~\cite{j1989}, the aim is to encode a data structure using space close to the information-theoretic lower bound, while supporting efficient navigation operations in them. 
This approach was successfully applied to many abstract data types, including bit vectors~\cite{j1989,cm1996,rrs2007}, strings~\cite{ggv2003,bhmr2007,gn2009}, binary relations~\cite{bgmr2007,bhmr2007}, (unlabeled and labeled) trees~\cite{j1989,mr2001,grr2006,flmm2009,bgmr2007,bhmr2007,sn2010}, graphs~\cite{j1989,mr2001,bchm2007} and text indexes~\cite{ggv2003,chl2004,gn2009}. %, and saved a lot of space. 
%A number of experimental studies using data sets from various text collections, XML databases and gene databases have further shown that succinct data strcuters can be used to save much storage cost in real applications. 
%The succinct representations of these data structures typically use a small percentage of space of that of the standard data structures, while keeping the functionality of these data structures. 

A basic building block for most succinct data structures is the pair of operations $\rankop$ and $\selop$. 
In particular, we require a highly space-efficient representation of a string $S$ of length $n$ over an alphabet of size $\sigma$ to support the fast evaluation of: 
%A fundamental problem in this field is representing strings to support $\rankop$ and $\selop$ operations. 
%In this problem, we are given a string $S$ of length $n$ over an alphabet of size $\sigma$, 
%and we wish to represent it succinctly to support the following operations: 
\begin{itemize}
\item $\accessop(S, i)$, which returns the character at position $i$ in the string $S$;
\item $\rankop_{\alpha}(S, i)$, which returns the number of occurrences of character $\alpha$ in $S[1..i]$; 
\item $\selop_{\alpha}(S, i)$, which returns the position of the $i$\textsuperscript{th} occurrence of character $\alpha$ in the string $S$. 
\end{itemize}
This problem has many applications such as designing space-efficient text indexes~\cite{ggv2003,gn2009}, as well as representing binary relations~\cite{bgmr2007,bhmr2007}, labeled trees~\cite{flmm2009,bgmr2007,bhmr2007} and labeled graphs~\cite{bchm2007}. 
The case in which the string is a bit vector whose characters are $0$'s and $1$'s (i.e. $\sigma = 2$) is even more fundamental:  
A bit vector supporting $\rankop$ and $\selop$ is a key structure used in several approaches of representing strings succinctly~\cite{ggv2003,bgmr2007,bhmr2007,mn2008}, and it is also used in perhaps most succinct data structures~\cite{j1989,mr2001,grr2006,flmm2009,bgmr2007,bhmr2007,bchm2007}. 

Due to the importance of strings and bit vectors, researchers have designed various succinct data structures for them~\cite{j1989,rrs2007,ggv2003,bhmr2007,gn2009} and achieved good results. 
For example, the data structure of Raman~\etal~\cite{rrs2007} can encode a bit vector using $nH_0 + o(n)$ bits, where $H_0$ is the zero-order entropy of the bit vector\footnote{The zero-order (empirical) entropy of a string of length $n$ over an alphabet of size $\sigma$ is defined as $H_0 = \sum_{i=1}^{\sigma}(\frac{n_i}{n}\lg \frac{n}{n_i})$, where $n_i$ is the number of times that the $i$\textsuperscript{th} character occurs in the string. Note that we always have $H_0 \le \lg\sigma$. This definition also applies to a bit vector, for which $\sigma = 2$.}, to support {\accessop}, {\rankop} and {\selop} operations in constant time. 
Another data structure called wavelet tree proposed by Grossi~\etal~\cite{ggv2003} can represent a string using $nH_0+o(n) \cdot \lg\sigma$ bits to support $\accessop$, $\rankop$ and $\selop$ in $O(\lg \sigma)$ time. 

However, in many applications, it is not enough to have succinct static data structures that allow data to be retrieved efficiently, because data in these applications are also updated frequently. 
For example, when designing a succinct text index to support fast string search in a collection of text documents, it might be necessary to allow documents to be inserted into or deleted from the collection. 
Thus it is necessary to study succinct dynamic data structures that not only support efficient retrieval operations, but also support efficient update operations. 
In the case of strings and bit vectors, the following two update operations are desired in many applications in addition to $\accessop$, $\rankop$ and $\selop$: 
\begin{itemize}
\item $\insertop_{\alpha}(S, i)$, which inserts character $\alpha$ between $S[i-1]$ and $S[i]$;
%***LONG
% in the current string; 
\item $\deleteop(S, i)$, which deletes $S[i]$ from $S$. 
\end{itemize}

In this paper, we design succinct representations of dynamic strings and bit vectors that are more efficient than previous results. We also present several applications to show how advancements on these fundamental problems yield improvements on other data structures. 

\subsection{Related Work}
\label{sec:related}

Blandford and Blelloch~\cite{bb2004} considered the problem of representing ordered lists succinctly, and their result can be used to represent a dynamic bit vector of length $n$ using $O(nH_0)$ bits to support the operations defined in Section~\ref{sec:intro} in $O(\lg n)$ time (note that $H_0 \le 1$ holds for a bit vector). 
A different approach proposed by Chan~\etal~\cite{chl2004} can encode dynamic bit vectors using $O(n)$ bits to provide the same support for operations. 
Later Chan~\etal~\cite{chls2007} improved this result by providing $O(\lg n/\lg\lg n)$-time support for all these operations while still using $O(n)$ bits of space. 
M{\"a}kinen and Navarro~\cite{mn2008} reduced the space cost to $nH_0 + o(n)$ bits, but their data structure requires $O(\lg n)$ time to support operations. 
Recently, Sadakane and Navarro~\cite{sn2010} designed a data structure for dynamic trees, and their main structure is essentially a bit vector that supports more types of operations. 
Their result can be used to represent a bit vector using $n+o(n)$ bits to support the operations we consider in $O(\lg n / \lg\lg n)$ time. 

For the more general problem of representing dynamic strings of length $n$ over alphabets of size $\sigma$, M{\"a}kinen and Navarro~\cite{mn2008} combined their results on bit vectors with the wavelet tree structure of Grossi~\etal~\cite{ggv2003} to design a data structure of $nH_0 + o(n)\cdot\lg \sigma$ bits that supports $\accessop$, $\rankop$ and $\selop$ in $O(\lg n \log_q\sigma)$ time, and $\insertop$ and $\deleteop$ in $O(q\lg n \log_q\sigma)$ time for any $q = o(\sqrt{\lg n})$. 
Lee and Park~\cite{lp2009} proposed another data structure of $n\lg\sigma + o(n)\cdot\lg \sigma$ to support {\accessop}, {\rankop} and {\selop} in $O(\lg n (\frac{\lg\sigma}{\lg\lg n} +1))$ worst-case time which is faster, but {\insertop} and {\deleteop} take $O(\lg n (\frac{\lg\sigma}{\lg\lg n} +1))$ amortized time. 
Finally, Gonz\'{a}lez and Navarro~\cite{gn2009} improved the above two results by designing a structure of $nH_0 + \linebreak o(n)\cdot\lg \sigma$ bits to support all the operations in $O(\lg n (\frac{\lg\sigma}{\lg\lg n} +1))$ worst-case time.  

Another interesting data structure is that of Gupta~\etal~\cite{ghsv2007}. 
%***LONG
%They considered the same problems, but they aimed at improving query time while sacrificing update time. 
For the same problems, they aimed at improving query time while sacrificing update time. 
Their bit vector structure occupies $nH_0 + o(n)$ bits and requires $O(\lg\lg n)$ time to support {\accessop}, {\rankop} and {\selop}. It takes $O(\lg^{\epsilon} n)$ amortized time to support {\insertop} and {\deleteop} for any constant $0<\epsilon < 1$. 
Their dynamic string structure uses $n\lg\sigma + \lg\sigma(o(n)+O(1))$ bits to provide the same support for operations (when $\sigma = O(\polylog(n))$, {\accessop}, {\rankop} and {\selop} take $O(1)$ time). 

\subsection{Our Results}

We adopt the word RAM model with word size $w = \Omega(\lg n)$. 
Our main result is a succinct data structure that encodes a string of length $n$ over an alphabet of size $\sigma$ in $n H_0 + o(n)\cdot\lg\sigma + O(w)$ bits to support {\accessop}, {\rankop}, {\selop}, {\insertop} and {\deleteop} in $O(\frac{\lg n}{\lg\lg n}(\frac{\lg \sigma}{\lg\lg n}+1))$ time. 
When $\sigma = O(\polylog(n))$, all these operations can be supported in $O(\frac{\lg n}{\lg\lg n})$ time. 
Note that the $O(w)$ term in the space cost exists in all previous results, and we omit them in Section~\ref{sec:related} and Table~\ref{table:compare} for simplicity of presentation (in fact many papers simply ignore them). 
Our structure can also encode a bit vector of length $n$ in $nH_0 + o(n) + O(w)$ bits to support the same operations in $O(\frac{\lg n}{\lg\lg n})$ time, matching the lower bound in \cite{fs1989}. 
Table~\ref{table:compare} in Appendix~\ref{app:compare} compares these results with previous results, from which we can see that our solutions are currently the best to the problem, 
for both the general case and the special case in which the alphabet size is $O(\polylog(n))$ or $2$ (i.e. the string is a bit vector). 
%not only for the general case, but also for the special case in which the alphabet size is $O(\polylog(n))$ or $2$ (i.e. the string is a bit vector). 
The only previous result that is not comparable is that of Gupta~\etal~\cite{ghsv2007}, since their solution is designed under the assumption that the string is queried frequently but updated infrequently.

%Lower bounds. 

We also apply the above results to design a succinct text index for a dynamic text collection to support text search, and the problem of reducing the required amount of working space when constructing a text index. 
Our dynamic string representation allows us to improve previous results on these problems~\cite{mn2008,gn2009}. 

\section{Preliminaries}
%***LONG
%\subsection{Searchable Partial Sums}
\paragraph{Searchable Partial Sums.} Raman~\etal~\cite{rrr2001} considered the problem of representing a dynamic sequence of integers to support {\sumop}, {\searchop} and {\updateop} operations. 
To achieve their main result, they designed a data structure for the following special case in which the length of the sequence is small: 

\begin{lemma}
\label{lem:smallpartial}

There is a data structure that can store a sequence, $Q$, of $O(\lg^{\epsilon} n)$ nonnegative integers of $O(\lg n)$ bits each\footnote{Raman~\etal~\cite{rrr2001} required each integer to fit in one word. However, it is easy to verify that their proof is still correct if each integer requires $O(\lg n)$ bits, i.e. each integer can require up to a constant number of words to store.}, for any constant $0 \le \epsilon < 1$, using $O(\lg^{1+\epsilon}n)$ bits to support the following operations in $O(1)$ time: 
\begin{itemize}
\item $\sumop(Q, i)$, which computes $\sum_{j=1}^i Q[j]$;
\item $\searchop(Q, x)$, which returns the smallest $i$ such that $\sumop(Q, i) \ge x$;
\item $\updateop(Q, i, \delta)$, which updates $Q[i]$ to $Q[i] + \delta$, where $|\delta| \le \lg n$.
\end{itemize}
The data structure requires a precomputed universal table of size $O(n^{\epsilon'})$ bits for any fixed $\epsilon' > 0$. 
The structure can be constructed in $O(\lg^{\epsilon} n)$ time except the precomputed table. 
\end{lemma}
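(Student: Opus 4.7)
The plan is to maintain $Q$ as a packed array of $m = O(\lg^{\epsilon} n)$ slots, each of fixed width $\Theta(\lg n)$, and to bootstrap all three operations from a precomputed universal lookup table of size $O(n^{\epsilon'})$ bits. Since $m\lg n = o(\lg^2 n)$, the whole representation occupies only $O(\lg^{\epsilon} n)$ machine words, and because $m \ll \lg n$, a single machine word comfortably holds one bit (or a small constant number of bits) per slot as a summary. Alongside $Q$ I would maintain the prefix-sum array $P[i]=\sum_{j\le i}Q[j]$ in the same packed layout; each $P[i]$ still fits in $O(\lg n)$ bits, keeping the total space within $O(\lg^{1+\epsilon}n)$ bits.

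The universal table would be keyed by an $\epsilon'\lg n$-bit packed window concatenated with a short query, and would answer the following primitives on a single word in $O(1)$ time: (i) the within-window prefix sum up to a given slot; (ii) the index of the first slot in a sorted window whose value reaches a given threshold; and (iii) the word obtained by adding a scalar $\delta$ to a given suffix of slots. With these in hand, $\sumop(Q,i)$ is a direct slot read from $P$ (with at most one extra table lookup to apply the offset trick described below). For $\searchop(Q,x)$ I would assemble a one-word summary whose $j$-th bit records whether the last entry of the $j$-th packed word of $P$ is at least $x$; a single most-significant-bit instruction on this summary pinpoints the correct word, and one invocation of primitive (ii) locates the slot within that word.

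The main obstacle is $\updateop$: modifying $Q[i]$ by $\delta$ forces a uniform addition of $\delta$ to the suffix $P[i..m]$, which naively spans $\Theta(\lg^{\epsilon} n)$ packed words. My plan is to represent $P$ indirectly as the packed array of word-local values together with a separate packed array of per-word offsets that together fit in a single auxiliary word. An update then costs one application of primitive (iii) to the word of $P$ containing position $i$, plus one more application of primitive (iii) to the offset word, for $O(1)$ total. The remaining issue is controlling the bit-width of the offsets: since $|\delta|\le\lg n$ and there are at most $m = O(\lg^{\epsilon}n)$ updates between rebuilds, each offset stays within $O(\lg\lg n)$ bits and $m$ such offsets pack into $o(\lg n)$ bits, fitting comfortably in one word. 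Triggering a full rebuild every $\Theta(m)$ updates preserves this invariant, and the standard incremental de-amortization (doing a constant share of the next rebuild during each operation) converts the amortized bound to worst-case $O(1)$. Verifying the offset-width invariant and confirming the $O(\lg^{\epsilon} n)$ construction time (excluding the shared precomputed table) are the main correctness checks that remain.
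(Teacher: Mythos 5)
Your handling of $\sumop$ and the offset trick for $\updateop$ is on the right track, but the proposal has a genuine gap in $\searchop$, and the gap traces back to a mismatch between your universal-table primitives and the actual slot width.

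Each slot of $P$ holds an integer of $\Theta(\lg n)$ bits, so a ``packed word'' of $P$ contains only $O(1)$ slots, and there are $\Theta(\lg^{\epsilon} n)$ such words. Your plan for $\searchop(Q,x)$ is to ``assemble a one-word summary whose $j$-th bit records whether the last entry of the $j$-th packed word of $P$ is at least $x$,'' but this summary depends on the query parameter $x$, so it cannot be precomputed, and assembling it on the fly requires one comparison per packed word --- that is $\Theta(\lg^{\epsilon} n)$ time, not $O(1)$. There is no table shortcut: the table key would have to encode both $x$ and enough of $P$ to answer the comparison, and since $x$ alone is $\Theta(\lg n)$ bits while your window is only $\epsilon'\lg n$ bits, the required key does not fit. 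The same problem undermines primitive~(ii) as stated: a window of $\epsilon'\lg n$ bits cannot contain even one full $\Theta(\lg n)$-bit slot, let alone several together with a $\Theta(\lg n)$-bit threshold. To get a constant-time $\searchop$ here one genuinely needs the fusion-tree machinery of Fredman and Willard (sketching on the distinguishing bits of the $O(\lg^{\epsilon} n)$ prefix sums, packing the $O(\lg^{\epsilon} n \cdot \lg\lg n) = o(\lg n)$-bit sketch array into a word, and doing parallel comparison), or an equivalent device; your proposal does not supply any such structure.

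For comparison, the paper does not reprove this lemma at all: it cites Raman, Raman and Rao~\cite{rrr2001} and merely remarks in a footnote that their argument carries over when each integer occupies $O(\lg n)$ bits rather than exactly one word. If you intend to give a self-contained proof, the missing piece is precisely the constant-time predecessor search over a word-sized family of $\Theta(\lg n)$-bit monotone prefix sums, which is the heart of the cited result.
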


We will use this lemma to encode information stored as small sequences of integers in our data structures. 

%***LONG
%\subsection{Collections of Searchable Partial Sums}
\label{sec:cspsi}

\paragraph{Collections of Searchable Partial Sums.} A key structure in the dynamic string representation of Gonz\'{a}lez and Navarro~\cite{gn2009} is a data structure that maintains a set of sequences of nonnegative integers, such that $\sumop$, $\searchop$ and $\updateop$ can be supported on any sequence efficiently, while {\insertop} and {\deleteop} operations are performed simultaneously on all the sequences at the same given position, with the restriction that only $0$'s can be inserted or deleted. 
More precisely, let $C$ = $Q_1, Q_2, \cdots, Q_d$ to be a set of dynamic sequences, and each sequence, $Q_j$, has $n$ nonnegative integers of $k = O(\lg n)$ bits each. 
The {\em collection of searchable partial sums with insertions and deletions (CSPSI)} problem is to encode $C$ to support the following operations:
\begin{itemize}
\item $\sumop(C, j, i)$, which computes $\sum_{p=1}^i Q_j[p]$;
\item $\searchop(C, j, x)$, which returns the smallest $i$ such that $\sumop(C, j, i) \ge x$;
\item $\updateop(C, j, i, \delta)$, which updates $Q_j[i]$ to $Q_j[i] + \delta$;
\item $\insertop(C, i)$, which inserts $0$ between $Q_j[i-1]$ and $Q_j[i]$ for all $1 \le j \le d$;
\item $\deleteop(C, i)$, which deletes $Q_j[i]$ from sequence $Q_j$ for all $1 \le j \le d$, and to perform this operation, $Q_j[i] = 0$ must hold for all $1 \le j \le d$. 
\end{itemize}
Gonz\'{a}lez and Navarro~\cite{gn2009} designed a data structure of $kdn (1 + O(\frac{1}{\sqrt{\lg n}} + \frac{d}{\lg n}))$ bits to support all the above operations in $O(d+\lg n)$ time. 
This structure becomes succinct (i.e. using $dk(n+o(n))$ bits if $d = o(\lg n)$, when the operations can be supported in $O(\lg n)$ time. 
A careful reading of their technique shows that  these results only work under the word RAM model with word size $w=\Theta(\lg n)$ (see our discussions after Lemma~\ref{lem:cspsi}). 
We improve this data structure for small $d$, which is further used to design our succinct string representations.

\section{Collections of Searchable Partial Sums}
\label{sec:ourcspsi}
%We follow the main steps of the approach of Gonz\'{a}lez and Navarro~\cite{gn2009} to design a succinct representation of dynamic strings, but we make improvements upon the data structures designed in each step. 
%We first improve the results of Gonz\'{a}lez and Navarro~\cite{gn2009} to design a data structure for the collection of searchable partial sums with insertions and deletions problem (Section~\ref{sec:ourcspsi}), and then combine it with other techniques to improve their data structure for strings over small alphabets (Section~\ref{sec:smallalphabet}). 
%Finally, we extend the result on small alphabets to general alphabets using the structure of multi-ary wavelet trees as in~\cite{mn2008} (Section~\ref{sec:generalalphabet}). 
%To improve the results of Gonz\'{a}lez and Navarro~\cite{gn2009} on CSPSI and strings over small alphabets, our main strategy is to divide the sequences into superblocks of appropriate size, and store them in the leaves of a B-tree (instead of the red-black tree in~\cite{gn2009}). 
%Similar ideas were applied to data structures for balanced parentheses~\cite{chls2007,sn2010}, and our work is the first that successfully adapts it to sequences of integers and strings of characters. 

We follow the main steps of the approach of Gonz\'{a}lez and Navarro~\cite{gn2009} to design a succinct representation of dynamic strings, but we make improvements upon the data structures designed in each step. 
We first improve their result to design a data structure for the collection of searchable partial sums with insertions and deletions problem (Section~\ref{sec:ourcspsi}), and then combine it with other techniques to improve their data structure for strings over small alphabets (Section~\ref{sec:smallalphabet}). 
Finally, we extend the result on small alphabets to general alphabets using the structure of multi-ary wavelet trees (Section~\ref{sec:generalalphabet}). 
Our main strategy of achieving these improvements is to divide the sequences into superblocks of appropriate size, and store them in the leaves of a B-tree (instead of the red-black tree in~\cite{gn2009}). 
Similar ideas were applied to data structures for balanced parentheses~\cite{chls2007,sn2010}. Our work is the first that successfully adapts it to integer sequences and character strings, and we have created new techniques to overcome some difficulties. 

In this section, we consider the CSPSI problem defined in section~\ref{sec:cspsi}. We assume that $d = O(\lg^{\eta} n)$ for any constant $0<\eta<1$, and for the operation $\updateop(C, j, i, \delta)$, we assume $|\delta| \le \lg n$. Under these assumptions, we improve the result in \cite{gn2009} 
%***LONG
%of Gonz\'{a}lez and Navarro~\cite{gn2009} 
under the word RAM model with word size $\Omega(\lg n)$. 

%***
%\subsection{Data Structures}
\label{sec:cspsids}

\paragraph{Data Structures.} Our main data structure is a B-tree constructed over the given collection $C$. 
Let $L = \lceil\frac{\lceil\lg n\rceil^2}{\lg\lceil\lg n\rceil}\rceil$. Each leaf of this B-tree stores a {\em superblock} whose size is between (and including) $L/2$ and $2L$ bits, and each superblock stores the same number of integers from each sequence in $C$. 
More precisely, the content of the leftmost leaf is $Q_1[1..s_1]Q_2[1..s_1]\cdots Q_d[1..s_1]$, the content of the second leftmost leaf is $Q_1[s_1+1..s_2]Q_2[s_1+1..s_2]\cdots Q_d[s_1+1..s_2]$, and so on, and the indices $s_1, s_2, \cdots$ satisfy the following conditions because of requirement on the sizes of superblocks: $L/2 \le s_1 k d \le 2L, L/2 \le (s_2-s_1) k d \le 2L,\cdots$. 
 
Let $f = \lg^{\lambda} n$, where $\lambda$ is a positive constant number less than $1$ that we will fix later. Each internal node of the B-tree we construct has at least $f$ and at most $2f$ children. 
We store the following $d+1$ sequences of integers for each internal node $v$ (let $h$ be the number of children of $v$): 
\begin{itemize}
\item A sequence $P(v)[1..h]$, in which $P(v)[i]$ is the number of positions stored in the leaves of the subtree rooted at the $i$\textsuperscript{th} child of $v$ for any sequence in $C$ (note that this number is the same for all sequences in $C$);
\item A sequence $R_j(v)[1..h]$ for each $j = 1, 2, \cdots, d$, in which $R_j(v)[i]$ is the sum of the integers from sequence $Q_j$ that are stored in the leaves of the subtree rooted at the $i$\textsuperscript{th} child of $v$. 
\end{itemize}
We use Lemma~\ref{lem:smallpartial} to encode each of the $d+1$ sequences of integers for $v$. 

We further divide each superblock into {\em blocks} of $\lceil\lceil \lg n\rceil^{3/2}\rceil$ bits each, and maintain the blocks for the same superblock using a linked list. 
Only the last block in the list can be partially full; any other block uses all its bits to store the data encoded in the superblock. 
This is how we store the superblocks physically. 

%***
%To analyze the space cost of the above data structures, we have the following lemma: 

We provide a proof sketch for the following lemma that analyzes the space cost of the above data structures (See Appendix~\ref{app:cspsisize} for the full proof): 
\begin{lemma}
\label{lem:cspsisize}
The above data structures occupy $kd (n + o(n))$ bits if the parameters $\lambda$ and $\eta$ satisfy $0<\lambda < 1-\eta$. 
\end{lemma}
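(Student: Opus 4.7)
The plan is to split the total storage into three components: (A) the raw integer data held in the leaf superblocks, (B) the partial-sum summaries $P(v)$ and $R_1(v),\ldots,R_d(v)$ stored at the internal B-tree nodes, and (C) block-level overheads (the partially-filled trailing block at the end of each superblock, the linked-list pointers that chain the blocks of a superblock, and the global universal table required by Lemma~\ref{lem:smallpartial}). Part (A) is exactly $ndk$ bits by construction, so the proof reduces to showing that parts (B) and (C) each contribute only $o(ndk)$ bits under the stated constraints $d = O(\lg^\eta n)$ and $0 < \lambda < 1-\eta$.

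The core of the argument is part (B). First, I would count nodes: each superblock stores $\Theta(L/(kd))$ positions per sequence with $L = \Theta(\lg^2 n / \lg\lg n)$, so there are $\Theta(ndk/L)$ leaves, and since the B-tree has minimum fanout $f = \lg^\lambda n$ the total number of internal nodes is smaller by a factor of $\Theta(f)$ via the usual geometric sum over levels. Next, I would bound the per-node cost: each of the $d+1$ summary sequences has length $h \le 2f = O(\lg^\lambda n)$ and stores values of $O(\lg n)$ bits (the sums $R_j(v)[i]$ are bounded by $n \cdot 2^k$), so invoking Lemma~\ref{lem:smallpartial} with $\epsilon = \lambda$ yields $O(\lg^{1+\lambda} n)$ bits per sequence and $O(d \cdot \lg^{1+\lambda} n)$ bits per internal node. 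Multiplying the two and simplifying using $d = O(\lg^\eta n)$ is then the routine step that shows part (B) is $o(ndk)$.

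Part (C) is straightforward once (B) is handled. A partially-filled final block wastes $O(\lg^{3/2} n)$ bits per leaf, and because there are only $\Theta(ndk/L)$ leaves this totals $o(ndk)$; a pointer between consecutive blocks costs $O(\lg n)$ bits and there are $O(ndk/\lg^{3/2} n)$ blocks in all, yielding $o(ndk)$ bits for the list overhead; and the universal table occupies only $O(n^{\epsilon'})$ bits for any small $\epsilon' > 0$, hence $o(n)$. Summing (A), (B) and (C) gives the claimed bound $kd(n + o(n))$.

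The main obstacle is the accounting in (B): one must calibrate $f$ (through $\lambda$) and $L$ so that the per-internal-node cost from Lemma~\ref{lem:smallpartial}, amplified by the number of internal nodes and by the factor $d$, stays strictly sublinear in $ndk$. The hypothesis $0 < \lambda < 1 - \eta$ is precisely what makes this work: it keeps each individual summary sequence of length $O(\lg^\lambda n)$ well inside the $\epsilon < 1$ regime of Lemma~\ref{lem:smallpartial}, ensures the combined number of integers stored at each node, $(d+1)h = O(\lg^{\lambda+\eta} n)$, remains $o(\lg n)$, and leaves enough slack in the final arithmetic to absorb the $\lg\lg n$ factor introduced by $L$.
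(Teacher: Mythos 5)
Your proposal is correct and follows the same decomposition as the paper's proof: raw data in the leaves, B-tree node summaries encoded via Lemma~\ref{lem:smallpartial}, block-list pointers and partially-filled trailing blocks, and the global universal table. The numerical bounds you give for each piece match the paper's.

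One point worth flagging: you count internal nodes more tightly than the paper does. You divide the leaf count $\Theta(ndk/L)$ by $\Theta(f)$ to get the internal-node count, whereas the paper's sketch simply bounds the number of internal nodes by the number of superblocks. If you carry your tighter count through, the $f = \lg^\lambda n$ factor cancels against the $O(\lg^{1+\lambda} n)$ per-sequence cost, and part (B) comes out to $ndk\cdot O\!\left(\frac{d\lg\lg n}{\lg n}\right)$, which is $o(ndk)$ for any $\eta<1$ with no constraint on $\lambda$ at all. The paper's looser count gives $ndk\cdot O\!\left(\frac{df\lg\lg n}{\lg n}\right)$, and that is where the hypothesis $\lambda<1-\eta$ is genuinely used. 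Your closing paragraph, which attributes the success to $\lambda<1-\eta$ "absorbing the $\lg\lg n$ factor," therefore describes the paper's accounting rather than your own; as written your proposal mixes the tighter count with commentary that only applies to the looser one. This is an exposition inconsistency rather than an error, since the hypothesis is given and certainly suffices. A minor omission: you do not list the B-tree's own parent/child pointers among the overheads, but these cost $O\!\left(\frac{kdn\lg\lg n}{\lg n}\right)$ bits and are dominated by the block-pointer term you already account for.
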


\begin{proof}[sketch]
It takes $kd n$ bits to encode the integers in $C$. 
The number of pointers that link the blocks is linear in the number of blocks which is  $O(\frac{kd n}{\lg^{3/2} n})$, so all the pointers occupy $O(\frac{kd n}{\sqrt{\lg n}})$ bits in total. 
The space wasted in the blocks that are partially full is $O(\frac{kd n \lg\lg n}{\sqrt{\lg n}})$ bits. 
The number of nodes in the B-tree is linear in the number of superblocks which is $O(\frac{kd n \lg\lg n}{\lg^2 n})$. 
The space used for each internal node is dominated by the space cost of the $d+1$ sequences constructed for it, which is $O(df\lg n)$ bits with a precomputed universal table of size $o(n)$ bits (we only need one copy of this {\em universal} table for the sequences for all the internal nodes). 
Therefore, the total space cost in bits is $kd n (1+ O(\frac{\lg\lg n}{\sqrt{\lg n}}) + O(\frac{df \lg\lg n}{\lg n})) = kd n (1+ O(\frac{\lg\lg n}{\sqrt{\lg n}}) +  O(\frac{\lg\lg n}{\lg^ {1-\eta - \lambda}n}))$,  
which is $kd (n+o(n))$ when $0<\lambda < 1-\eta$. \qed
\end{proof}

%***
%\subsection{Supporting {\sumop}, {\searchop} and {\updateop}}
%\paragraph{Supporting {\sumop}, {\searchop} and {\updateop}.} We first show how to use our data structures to support {\sumop}, {\searchop} and {\updateop}. 
%We discuss these three operations first because they do not change the size of $C$. 

\paragraph{Supporting {\sumop}, {\searchop} and {\updateop}.} We discuss these three operations first because they do not change the size of $C$. 

\begin{lemma}
\label{lem:cspsisum}
The data structures in this section 
%***
%Section~\ref{sec:cspsids} 
can support {\sumop}, {\searchop} and {\updateop} in $O(\frac{\lg n}{\lg\lg n})$ time with an additional universal table of $o(n)$ bits. 
\end{lemma}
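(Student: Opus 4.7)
My plan is to reduce each of the three operations to a root-to-leaf descent of the B-tree, with $O(1)$ work at each internal node via Lemma~\ref{lem:smallpartial} and $O(\lg n/\lg\lg n)$ work at the leaf via a precomputed universal table of $o(n)$ bits. The first step is to observe that the B-tree has height $O(\lg n/\lg\lg n)$: each leaf stores $\Theta(L/(kd))$ positions per sequence and each internal node has branching factor at least $f=\lg^\lambda n$, so the height is $O(\log_f(n/\Theta(L/(kd)))) = O(\lg n/(\lambda\lg\lg n))$, as required.

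For $\sumop(C,j,i)$ I descend from the root: at each internal node $v$, I use $\searchop$ on $P(v)$ to find the child $c$ whose subtree contains the target position, decrease $i$ by $\sumop(P(v),c-1)$, and add $\sumop(R_j(v),c-1)$ to a running accumulator. Since $P(v)$ and $R_j(v)$ have length at most $2f = O(\lg^\lambda n)$, each such call is $O(1)$ by Lemma~\ref{lem:smallpartial}. $\searchop(C,j,x)$ is symmetric: descend using $\searchop$ on $R_j(v)$ and accumulate positions via $\sumop$ on $P(v)$. For $\updateop(C,j,i,\delta)$ I first locate the leaf as in $\sumop$ (using $P(v)$), modify the single $k$-bit integer in $O(1)$ word operations since $k=O(\lg n)$, and then invoke $\updateop$ of Lemma~\ref{lem:smallpartial} on $R_j(v)$ at every ancestor $v$; the precondition $|\delta|\le\lg n$ here is precisely the one Lemma~\ref{lem:smallpartial} demands.

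The main obstacle is the work at the leaf, because a superblock contains $\Theta(L/(kd))$ integers per sequence, which exceeds the $O(\lg^\epsilon n)$ bound required by Lemma~\ref{lem:smallpartial}. I therefore plan to build a single universal table, indexed by the $2^{\lceil(\lg n)/2\rceil}$ possible bit patterns of a $\lceil(\lg n)/2\rceil$-bit chunk together with an $O(\lg\lg n)$-bit offset telling where the first $k$-bit integer begins within the chunk, whose entries give the sum of the packed integers in the chunk and enough information to answer a prefix-sum search within it; the total size is $O(\sqrt{n}\lg^2 n) = o(n)$ bits. The integers of $Q_j$ at the leaf occupy a contiguous $O(L/d)$-bit segment that may span several blocks of the linked list, which I walk chunk by chunk, stitching across block boundaries where needed. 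The number of chunks is $O((L/d)/\lg n) = O(\lg n/(d\lg\lg n))$, and the list traversal adds only a lower-order $O(\sqrt{\lg n}/\lg\lg n)$ term, so the leaf processing costs $O(\lg n/\lg\lg n)$ time. Combined with the descent, all three operations run in $O(\lg n/\lg\lg n)$ time.
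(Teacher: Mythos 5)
Your proposal matches the paper's proof in all essentials: the same $O(\lg n/\lg\lg n)$-height B-tree descent with $O(1)$ work per internal node via Lemma~\ref{lem:smallpartial} on $P(v)$ and $R_j(v)$, the same update of $R_j(v)$ along the root-to-leaf path for $\updateop$, and the same chunk-by-chunk scan of $Q_j$'s segment of the superblock using an $o(n)$-bit precomputed table (the paper's $A_1$). The only cosmetic difference is how you handle misalignment of $k$-bit integers against half-word chunk boundaries — you fold an $O(\lg\lg n)$-bit offset into the table index, whereas the paper re-aligns by carrying the $a=\lceil\tfrac12\lg n\rceil \bmod k$ leftover bits into the next lookup and makes $A_1$ two-dimensional over the partial-count $g$ — but these are interchangeable and both give $O(\sqrt{n}\,\mathrm{polylog}(n)) = o(n)$ bits.
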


\begin{proof}
To support $\sumop(C, j, i)$, we perform a top-down traversal in the B-tree. 
In our algorithm, we use a variable $r$ that is initially $0$, and its value will increase as we go down the tree. 
We have another variable $s$ whose initial value is $i$. 
Initially, let $v$ be the root of the tree. 
As $P(v)$ stores the number of positions stored in the subtrees rooted at each child of $v$, the subtree rooted at the $c$\textsuperscript{th} child of $v$, where $c=\searchop(P(v), i)$, contains position $i$. 
We also compute the sum of the integers from the sequence $Q_j$ that are stored in the subtrees rooted at the left siblings of the $c$\textsuperscript{th} child of $v$, which is $y=\sumop(R_j(v), c-1)$, and we increase the value of $r$ by $y$. 
We then set $v$ to be its $c$\textsuperscript{th} child, decrease the value of $s$ by $\sumop(P(v), c-1)$, and the process continues until we reach a leaf. 
At this time, $r$ records the sum of the integers from the sequence $Q_j$ that are before the first position stored in the superblock of the leaf we reach. 
As the height of this B-tree is $O(\frac{\lg n}{\lg\lg n})$, and the computation at each internal node takes constant time by Lemma~\ref{lem:smallpartial}, it takes $O(\frac{\lg n}{\lg\lg n})$ time to locate this superblock. 

It now suffices to compute the sum of the first $s$ integers from sequence $Q_j$ that are stored in the superblock. 
This can be done by first going to the block storing the first integer in the superblock that is from $Q_j$, which takes $O(\frac{\sqrt{\lg n}}{\lg\lg n})$ time (recall that each block is of fixed size and there are $O(\frac{\sqrt{\lg n}}{\lg\lg n})$ of them in a superblock), and then read the sequence in chunks of $\lceil \frac{1}{2}\lg n \rceil$ bits. 
For each $\lceil \frac{1}{2}\lg n \rceil$ bits we read, we use a universal table $A_1$ to find out the sum of the $z=\lfloor\lceil \frac{1}{2}\lg n \rceil / k\rfloor$ integers stored in it (the last $a=\lceil \frac{1}{2}\lg n \rceil \bmod k$ bits in this block are concatenated with the next $\lceil \frac{1}{2}\lg n \rceil - a$ bits read for table lookup). 
This table simply stores the result for each possible bit strings of length $\lceil \frac{1}{2}\lg n \rceil$, so that the above computation can be done in constant time. 
Note that the last chunk we read this way may contain integers after $Q_j[i]$. 
To address the problem, we augment $A_1$ so that it is a two dimensional table $A_1[1..2^{\lceil \frac{1}{2}\lg n \rceil}][1..z]$, in which $A[b][g]$ stores for the $b$\textsuperscript{th} lexicographically smallest bit vector of length $\lceil \frac{1}{2}\lg n \rceil$, the sum of the first $g$ integers of size $k$ stored in it. 
This way the computation in the superblock can be done in $O(\frac{\lg n}{\lg\lg n})$ time, and thus  $\sumop(C, j, i)$ can be supported in  $O(\frac{\lg n}{\lg\lg n})$ time. 
The additional data structure we require is table $A_1$, which occupies $O(2^{\lceil \frac{1}{2}\lg n \rceil} \times \lfloor\lceil \frac{1}{2}\lg n \rceil / k\rfloor \times \lg n) = O((\sqrt{n}\lg^2 n)/k)$ bits. 
%***
%= o(n)$ bits. 

The operations $\searchop$ and $\updateop$ can be supported in a similar manner. See Appendix~\ref{app:cspsisearch} for more details. \qed
\end{proof}

%***
%\subsection{Supporting {\insertop} and {\deleteop}}
\paragraph{Supporting {\insertop} and {\deleteop}.} We give a proof sketch of the following lemma on supporting {\insertop} and {\deleteop} (See Appendix~\ref{app:cspsiinsert} for the full proof): 
\begin{lemma}
\label{lem:cspsiinsert}
%***
%When $w = \Theta(\lg n)$, the data structures in Section~\ref{sec:cspsids} can support {\insertop} and {\deleteop} in $O(\frac{\lg n}{\lg\lg n})$ time. 
When $w = \Theta(\lg n)$, the data structures in this section can support {\insertop} and {\deleteop} in $O(\frac{\lg n}{\lg\lg n})$ amortized time. 
\end{lemma}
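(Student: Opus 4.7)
The plan is to process each \insertop or \deleteop in three phases that mirror the \sumop algorithm of Lemma~\ref{lem:cspsisum}: a top-down descent to locate the target leaf, an in-place modification of its superblock, and a bottom-up repair of the counts stored at each ancestor on the descent path. The descent is identical to that of \sumop, calling $\searchop(P(v),\cdot)$ at each internal node to pick the subtree containing position $i$, and takes $O(\lg n / \lg\lg n)$ time since the B-tree has height $O(\lg n / \lg\lg n)$ and each step is constant time by Lemma~\ref{lem:smallpartial}. The bottom-up repair is cheap because of a crucial observation: \insertop only ever inserts the value $0$ and \deleteop is only permitted to remove a $0$, so every $R_j(v)$ remains correct automatically, and only the child count in $P(v)$ along the descent path needs to change. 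A single call to \updateop at each ancestor---costing $O(1)$ by Lemma~\ref{lem:smallpartial}---therefore suffices, again summing to $O(\lg n / \lg\lg n)$.

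The leaf modification is the delicate step. The target superblock stores the $d$ consecutive regions $Q_1[\ldots] Q_2[\ldots] \cdots Q_d[\ldots]$, each holding $s$ integers of $k$ bits. We process the regions from right to left: for each region in turn, we first shift it to its new offset (which changes by a multiple of $k$ depending on how many regions still lie to its right) and then insert or erase the $k$-bit zero at its local position $ik$. Using word-wide shifts and masks on the $w = \Theta(\lg n)$-bit words, each region contributes $O(sk/w)$ work, for a total of $O(dsk/w) = O(L/w) = O(\lg n / \lg\lg n)$. Walking the linked list of blocks and, when necessary, appending or freeing a single constant-size block at its tail adds only $O(L / \lg^{3/2} n) = o(\lg n / \lg\lg n)$.

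The remaining work is charged amortized. If the superblock grows past $2L$ bits or shrinks below $L/2$, we split it, or redistribute/merge it with a sibling; this rewrites $O(L)$ bits at cost $O(L/w)$ and forces a rebuild of the parent's $d+1$ small sequences via Lemma~\ref{lem:smallpartial} at cost $O(df)$. Because each operation changes the superblock size by at most $dk$ bits, at least $\Theta(L/(dk))$ operations separate two splits of the same leaf, so the amortized leaf-split charge is $O((L/w + df) \cdot dk / L) = O(dk/w + d^2 f k / L)$. Cascading B-tree node splits are handled by the standard potential argument for $[f, 2f]$-branching B-trees, yielding an amortized $O(1)$ node splits per leaf creation, each costing $O(df)$ to rebuild one node's partial-sum sequences. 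Substituting $d = O(\lg^{\eta} n)$, $f = \lg^{\lambda} n$, $k = O(\lg n)$, $w = \Theta(\lg n)$, and $L = \Theta(\lg^2 n / \lg\lg n)$, and using the parameter window $0 < \lambda < 1 - \eta$ already required by Lemma~\ref{lem:cspsisize}, every one of these amortized terms collapses to $o(\lg n / \lg\lg n)$, giving the claimed $O(\lg n / \lg\lg n)$ amortized bound. The deletion analysis is symmetric, using merges/redistributions in place of splits. The main obstacle is phase two: arranging the right-to-left, word-parallel shift so that its cost is genuinely $O(L/w)$ rather than $O(dL/w)$ despite having $d$ different insertion points and threading across the linked list of blocks; beyond that, the remaining delicate bookkeeping is checking that $\lambda < 1 - \eta$ is still wide enough to absorb the structural-rebuild terms $O(d^2 f k / L)$ and $O(df)$.
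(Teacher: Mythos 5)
Your descent and bottom-up repair, the observation that $R_j(v)$ never needs updating because only zeros are inserted or deleted, the $O(df)$ rebuild of the parent's sequences after a leaf split, and the amortized charging of cascading internal-node splits all track the paper's argument. Your leaf-modification step differs mechanically --- you do an in-place right-to-left shift with $d$ insertion points, whereas the paper allocates a fresh superblock and copies forward in $\lceil\lg n\rceil$-bit chunks, inserting the $d$ zeros as it streams past them --- but both yield $O(L/w + d) = O(\lg n/\lg\lg n)$, and the paper's copy-forward variant is the easier one to carry out across the linked list of blocks.

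There is, however, a genuine gap: you never address what happens when an \insertop{} or \deleteop{} changes $\lceil\lg n\rceil$. This is the entire reason the lemma is restricted to $w=\Theta(\lg n)$, and it occupies a substantial portion of the paper's proof. When $\lceil\lg n\rceil$ shifts, the intended values of $L$, the block size $\lceil\lceil\lg n\rceil^{3/2}\rceil$, the chunk size $\lceil\frac{1}{2}\lg n\rceil$, and consequently the contents of the universal table $A_1$ used by Lemma~\ref{lem:cspsisum} all change. The paper resolves this by observing that under $w=\Theta(\lg n)$ these parameters can only ever drift by a constant factor, so the structure simply does not adjust $L$ or the block size and all space and time bounds survive unchanged; and it rebuilds $A_1$ (an $o(n)$-time job) only when $\lceil\lg n\rceil$ actually moves, charging that cost to the $\Theta(n)$ updates that must have occurred in between. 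Without this piece your argument does not explain why the $w=\Theta(\lg n)$ hypothesis appears in the statement, and the amortized bound is unproved for the boundary operations where $\lceil\lg n\rceil$ crosses an integer.

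A smaller point: you amortize the cost of splitting a leaf's superblock ("at least $\Theta(L/(dk))$ operations separate two splits"), but this is unnecessary --- the split itself is just one more $O(L/w)=O(\lg n/\lg\lg n)$ copy, already within the worst-case budget. Only the cascading internal-node splits (and, as above, the $A_1$ rebuild) actually need amortization.
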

\begin{proof}[sketch]
To support $\insertop(C, i)$, we locate the leaf containing position $i$ as we do for $\sumop$, updating $P(v)$'s along the way. 
We insert a $0$ before the $i$\textsuperscript{th} position of all the sequences by creating a new superblock, copying the data from the old superblock contained in this leaf to this new superblock in chunks of size $\lceil\lg n\rceil$, and adding $0$'s at appropriate positions when we copy. 
This takes $O(\frac{\lg n}{\lg\lg n} + d) = O(\frac{\lg n}{\lg\lg n})$ time. 
If the size of the new superblock exceeds $2L$, we split it into two superblocks containing roughly the same number of positions. 
The parent of the old leaf becomes the parent, $v$, of both new leaves, and we reconstruct the data structures for $P(v)$ and $R_j(v)$'s in $O(df) = o(\frac{\lg n}{\lg\lg n})$ time. 
This may make a series of internal nodes to overflow, and in the amortized sense, each split of the leaf will only cause a constant number of internal nodes to overflow. 
This gives us an algorithm that supports {\insertop} in $O(\frac{\lg n}{\lg\lg n})$ amortized time. %, which can be deamortized as in \cite{sn2010}. 
The support for {\deleteop} is similar. 

Each {\insertop} or {\deleteop} changes $n$ by $1$. 
This might change the value $\lceil\lg n\rceil$, which will in turn affect $L$, the size of blocks, and the content of $A_1$. 
As $w = \Theta(\lg n)$, $L$ and the block size will only change by a constant factor. 
Thus if we do not change these parameters, all our previous space and time analysis still applies. 
The $o(n)$ time required to reconstruct $A_1$ each time $\lceil\lg n\rceil$ changes can be charged to at least $\Theta(n)$ {\insertop} or {\deleteop} operations.\qed
%The table $A_1$ can be built incrementally each time we perform {\insertop} and {\deleteop}.\qed
\end{proof}

As we use a B-tree in our solution, a new problem is to deamortize the process of supporting {\insertop} and {\deleteop}. 
We also need to generalize our results to the case in which the word size of the RAM is $w=\Omega(\lg n)$. 
This leads to the following lemma which presents our solution to the CSPSI problem, and we give a sketch proof (See Appendix~\ref{app:cspsi} for the full proof):  
%Combining Lemma~\ref{lem:cspsisize}, Lemma~\ref{lem:cspsisum} and Lemma~\ref{lem:cspsiinsert}, we have: 
\begin{lemma}
\label{lem:cspsi}
Consider a collection, $C$, of $d$ sequences of $n$ nonnegative integers each ($d = O(\lg^{\eta} n)$ for any constant $0<\eta<1$), in which each integer requires $k$ bits. 
Under the word RAM model with word size $\Omega(\lg n)$, $C$ can be represented using $O(kdn+w)$ bits to support $\sumop$, $\searchop$, $\updateop$, $\insertop$ and $\deleteop$ in $O(\frac{\lg n}{\lg\lg n})$ time with a buffer of $O(n\lg n)$ bits (for the operation $\updateop(C, j, i, \delta)$, we assume $|\delta| \le \lg n$). 
\end{lemma}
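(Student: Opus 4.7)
The plan is to start from the amortized structure underlying Lemma~\ref{lem:cspsiinsert} and apply two modifications: a top-down preemptive restructuring scheme on the B-tree to make {\insertop} and {\deleteop} worst-case, and a memory-pool allocation that decouples the internal pointer widths from the machine word size $w$.

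To deamortize {\insertop} and {\deleteop}, I would use preemptive splitting and merging during the downward traversal. While descending to the target leaf, whenever the algorithm is about to enter a child with $2f$ children (during an insertion), or with $f$ children (during a deletion), it first splits that child into two nodes of $f$ children each, or merges it with a sibling (or borrows one child from a sibling), respectively. Because the parent is guaranteed to have strictly fewer than $2f$ children and strictly more than $f$ at the moment of restructuring, the operation is absorbed locally and never propagates back up after the leaf is modified. Each such local restructuring rebuilds the Lemma~\ref{lem:smallpartial} sequences $P(v)$ and $R_j(v)$ at $O(1)$ affected nodes in $O(df) = o(\frac{\lg n}{\lg\lg n})$ time; summed over $O(\frac{\lg n}{\lg\lg n})$ levels, this matches the amortized bound of Lemma~\ref{lem:cspsiinsert}. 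The leaf-level work, namely copying the affected superblock in $\lceil\lg n\rceil$-bit chunks and optionally splitting or merging it, is unchanged and still fits within this time budget.

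To handle word size $w = \Omega(\lg n)$, the main concern is that a native machine pointer occupies $w$ bits and would inflate the space whenever $w \gg \lg n$. I would preallocate a contiguous buffer of $\Theta(n\lg n)$ bits and implement every block pointer, linked-list link, and B-tree child pointer as an $O(\lg n)$-bit index into this buffer. Only a constant number of top-level variables (the buffer's base address, the current value of $n$, and a pointer to the universal table) live in machine words and contribute the $O(w)$ additive term. Standard doubling and halving of the buffer whenever $n$ changes by a constant factor keep its size $\Theta(n\lg n)$; the $\Theta(n\lg n)$-bit rebuild cost is charged to $\Theta(n)$ updates and is deamortized in the usual way by maintaining the old and new buffers in parallel and migrating $O(1)$ blocks per update. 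The same incremental rebuild handles the reconstruction of the universal table $A_1$ and the minor parameter changes (such as the value of $L$ or the block size) that are triggered when $\lceil\lg n\rceil$ moves.

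The main obstacle will be verifying that preemptive restructuring is genuinely self-contained at each level, given that the sequences $P(v)$ and $R_j(v)$ are nontrivial Lemma~\ref{lem:smallpartial} structures rather than plain arrays: a split or merge must rebuild exactly the two (or three) affected Lemma~\ref{lem:smallpartial} structures strictly within the $O(df)$ budget, which is tight given the choice $f = \lg^{\lambda} n$ with $\lambda < 1 - \eta$. Confirming that this restructuring, combined with the incremental buffer migration and table reconstruction, continues to yield $O(\frac{\lg n}{\lg\lg n})$ worst-case time per operation, and that the total space remains $O(kdn + w)$ bits inside the $O(n\lg n)$-bit buffer, will be the technical heart of the argument.
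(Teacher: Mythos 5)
Your deamortization via preemptive (top-down) splitting does not actually give worst-case $O(\frac{\lg n}{\lg\lg n})$ time, and the sentence claiming that the per-level cost ``summed over $O(\frac{\lg n}{\lg\lg n})$ levels $\ldots$ matches the amortized bound'' hides an arithmetic error. Preemptive splitting guarantees that a split is absorbed by the parent (so you never re-ascend), but it does \emph{not} bound the number of splits per insertion to $O(1)$: if every node on the root-to-leaf path has degree $2f$, a single insertion triggers a split at every level. Each split rebuilds the Lemma~\ref{lem:smallpartial} structures $P(v)$ and $R_j(v)$, costing $O(df)=O(\lg^{\eta+\lambda}n)$ time, so the worst-case total over $O(\frac{\lg n}{\lg\lg n})$ levels is $O\bigl(\frac{\lg^{1+\eta+\lambda}n}{\lg\lg n}\bigr)$. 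Since $\eta+\lambda>0$, this is $\omega(\frac{\lg n}{\lg\lg n})$ and exceeds the claimed bound. The same problem afflicts your preemptive merges for $\deleteop$. This is exactly the obstacle the paper must circumvent, and preemptive B-tree restructuring (a technique whose purpose is to avoid re-ascending, not to reduce the worst-case number of splits) does not circumvent it.

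The paper instead uses a combination of Overmars--van Leeuwen global rebuilding and Fleischer's technique. Fleischer maintains an $(a,2b)$-tree so that, after the target leaf is located, at most \emph{one} internal node is split per insertion, paid for by relaxing the maximum degree to $4f$ and maintaining an $r$-pointer per leaf that marches up the tree over successive insertions; this gives $O(df)=o(\frac{\lg n}{\lg\lg n})$ worst-case restructuring cost per insertion rather than per level. Deletions are simply not rebalanced; instead, once $\Theta(n)$ updates have occurred, the whole structure is rebuilt incrementally, which is where the $O(n\lg n)$-bit buffer and the $O(kdn+w)$ (rather than $kd(n+o(n))$) space bound come from. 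Your memory-pool idea for $w=\omega(\lg n)$ is in the right spirit and close to what the paper does (fixed-size segments with $O(\lg n)$-bit indices, one $O(w)$-bit base pointer), although you should be careful that a single $\Theta(n\lg n)$-bit pool cannot hold the data itself, which is $\Theta(kdn)$ bits and may be $\omega(n\lg n)$; the paper pools the internal nodes and the leaf blocks in precisely sized areas rather than one uniform buffer.
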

\begin{proof}[sketch]
To deamortize the algorithm for {\insertop} and {\deleteop}, we first observe that the table $A_1$ can be built incrementally each time we perform {\insertop} and {\deleteop}. 
Thus the challenging part is to re-balance the B-tree (i.e. to merge and split its leaves and internal nodes) after insertion and deletion. 
For this we use the {\em global rebuilding} approach of Overmars and van Leeuwen~\cite{ol1981}. 
By their approach, if there exist two constant numbers $c_1 > 0$ and $0<c_2<1$ such that after performing $c_1 n$ insertions and/or $c_2 n$ deletions without re-balancing the B-tree, we can still perform query operations in $O(\frac{\lg n}{\lg\lg n})$ time, and if the B-tree can be rebuilt in $O(f(n)\times n)$ time, we can support insertion or deletion in $O(\frac{\lg n}{\lg\lg n}+f(n))$ worse-case time using additional space proportional to the size of our original data structures and a buffer of size $O(n \lg n)$ bits. 
We first note that if we do not re-balance the B-tree after performing {\deleteop} $c_2 n$ times for any $0 < c_2 < 1$, the time required to answer a query will not change asymptotically. 
This is however different for {\insertop}: one bad case is to perform insertion $\Theta(n)$ times in the same leaf. 
To address this problem, we use the approach of Fleischer~\cite{f1996} as in \cite{sn2010}. 
Essentially, in his approach, at most one internal node and one leaf is split after each insertion, which guarantees that the maximum degree of internal nodes will not exceed $4f$. 
Increasing the maximum degree of internal nodes to $4f$ will not affect our analysis. 
This way after $\Theta(n)$ insertions, query operations can still be performed in $O(\frac{\lg n}{\lg\lg n})$ time. 
Finally, we note that it takes $O(nd)$ time to construct the B-tree, so we can support {\insertop} and {\deleteop} in $O(d+\frac{\lg n}{\lg\lg n}) = O(\frac{\lg n}{\lg\lg n})$ time. 

To apply the global rebuilding approach to our data structure, when the number of {\insertop} and {\deleteop} operations performed exceeds half the initial length of the sequences stored in the data structure, we build a new data structure incrementally. In this data structure, the value of $\lceil\lg n\rceil$ is determined by the value of $n$ when we start the rebuilding process. Using this we can handle the change of the value of $\lceil\lg n\rceil$ smoothly. 
To reduce the space overhead when $w = \omega(\lg n)$, we simply allocate a memory block whose size is sufficient for the new structure until another structure has to be built, and this only increases the space cost of our data structures by a constant factor. 
Thus we can still use pointers of size $O(\lg n)$ bits (not $O(w)$ bits), and a constant number of machine words of $O(w)$ bits are required to record the address of each memory block allocated.\qed
\end{proof}

%It may seem odd that the above lemma assumes that the word size of the RAM is $w=\Theta(\lg n)$ (this is in fact a model used for most static succinct data structures), while our other results are under the more general word RAM mode with word size $w=\Omega(\lg n)$. 
%When we prove Lemma~\ref{lem:lgnchange}, we will show that this will not be a problem for our main result. 
Recall that in Section~\ref{sec:cspsi}, we stated that Gonz\'{a}lez and Navarro~\cite{gn2009} designed a data structure of $kdn (1 + O(\frac{1}{\sqrt{\lg n}} + \frac{d}{\lg n}))$ bits. 
This is more compact, but it only works for the special case in which $w=\Theta(\lg n)$. 
Gonz\'{a}lez and Navarro~\cite{gn2009} actually stated that their result (See Section~\ref{sec:cspsi}) would work when $w=\Omega(\lg n)$. 
This requires greater care than given in their paper. 
Their strategy is to adapt the approach of M{\"a}kinen and Navarro~\cite{mn2008} developed originally for a dynamic bit vector structure. 
To use it for the CSPSI problem, they split each sequence into three subsequences. 
The split points are the same for all the sequences in $C$. 
The set of left, middle, and right subsequences constitute three collections, and they build CSPSI structures for each of them. 
For each insertion and deletion, a constant number of elements is moved from one collection to another, which will eventually achieve the desired result with other techniques. 
Moving one element from one collection to another means that the first or the last integers of all the subsequences in one collection must be moved to the subsequences in another collection. 
However, their CSPSI structure only supports insertions and deletions of $0$'s at the same position in all subsequences in $O(\lg n)$ time, which means moving one element using their structure cannot be supported fast enough. 
Thus their structure only works when $w=\Theta(\lg n)$. 
Their result can be generalized to the case in which $w=\Omega(\lg n)$ using the approach in Lemma~\ref{lem:cspsi}, but the space will be increased to $O(kdn+w)$ bits and a buffer will be required.

\section{Strings over Small Alphabets}
\label{sec:smallalphabet}
In this section, we consider representing a dynamic string $S[1..n]$ over an alphabet of size $\sigma = O(\sqrt{\lg n})$ to support {\accessop}, {\rankop}, {\selop}, {\insertop} and {\deleteop}. 

%***
%\subsection{Data Structures}
\label{sec:stringds}

\paragraph{Data Structures.} Our main data structure is a B-tree constructed over $S$. 
We again let $L = \lceil\frac{\lceil\lg n\rceil^2}{\lg\lceil\lg n\rceil}\rceil$. Each leaf of this B-tree contains a {\em superblock} that has at most $2L$ bits. 
We say that a superblock is {\em skinny} if it has fewer than $L$ bits. 
The string $S$ is initially partitioned into substrings, and each substring is stored in a superblock. 
We number the superblocks consecutively from left to right starting from $1$. 
Superblock $i$ stores the $i$\textsuperscript{th} substring from left to right. 
To bound the number of leaves and superblocks, we require that there do not exist two consecutive skinny superblocks. 
Thus there are $O(\frac{n\lg\sigma}{L})$ superblocks. 

Let $b = \sqrt{\lg n}$, and we require that the degree of each internal node of the B-tree is at least $b$ and at most $2b$. 
For each internal node $v$, we store the following data structures encoded by Lemma~\ref{lem:smallpartial} (let $h$ be the number of children of $v$): 
\begin{itemize}
\item A sequence $U(v)[1..h]$, in which $U(v)[i]$ is the number of superblocks contained in the leaves of the subtree rooted at the $i$\textsuperscript{th} child of $v$; 
\item A sequence $I(v)[1..h]$, in which $I(v)[i]$ stores the number of characters stored 
%***
%in the superblocks contained 
in the leaves of the subtree rooted at the $i$\textsuperscript{th} child of $v$. 
\end{itemize}

As in Section~\ref{sec:cspsids}, each superblock is further stored in a list of blocks of $\lceil\lceil \lg n\rceil^{3/2}\rceil$ bits each, and only the last block in each list can have free space. 

Finally for each character $\alpha$, we construct an integer sequence $E_{\alpha}[1..t]$ in which $E_{\alpha}[i]$ stores the number of occurrences of character $\alpha$ in superblock $i$. 
We create $\sigma$ integer sequences in this way, and we construct a CSPSI structure, $E$, for them using Lemma~\ref{lem:cspsi}. 

%***LONG VERSION
%For space analysis, we have the following lemma: 
The space analysis is similar to that in Section~\ref{sec:cspsids}, and we have the following lemma when $w = \Theta(\lg n)$ (See Appendix~\ref{app:stringsize} for its proof): 
\begin{lemma}
\label{lem:stringsize}
The above data structures occupy $n\lg \sigma + O(\frac{n\lg\sigma\lg\lg n}{\sqrt{\lg n}})$ bits. 
\end{lemma}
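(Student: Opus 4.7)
The plan is to decompose the space cost into four contributions matching those in the proof sketch of Lemma~\ref{lem:cspsisize}: (i) the raw characters stored in the superblocks, contributing exactly $n \lg \sigma$ bits; (ii) the linked-list overhead, consisting of block pointers plus the unused tail in the final block of each superblock; (iii) the internal nodes of the B-tree; and (iv) the CSPSI structure $E$ built on the counter sequences $E_\alpha$. Since (i) matches the leading term of the lemma exactly, the real task is to show that (ii), (iii) and (iv) each fit inside $O(n \lg \sigma \lg \lg n / \sqrt{\lg n})$ bits.

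The first concrete step is to bound the number $t$ of superblocks using the invariant that no two consecutive superblocks are skinny. This forces every pair of adjacent superblocks to hold at least $L$ bits, giving $t = O(n \lg \sigma / L) = O(n \lg \sigma \lg \lg n / \lg^2 n)$. From $t$ I can derive a block count of $O(n \lg \sigma / \lg^{3/2} n)$, so $O(\lg n)$-bit pointers contribute $O(n \lg \sigma / \sqrt{\lg n})$ bits in total, and the $O(\lg^{3/2} n)$-bit unused tails — one per superblock — contribute $O(n \lg \sigma \lg \lg n / \sqrt{\lg n})$ bits, already meeting the target bound for (ii).

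For (iii), I will use the fact that a B-tree with $t$ leaves and minimum branching factor $b = \sqrt{\lg n}$ has $O(t / b)$ internal nodes. Each internal node holds two sequences $U(v)$ and $I(v)$ of length at most $2b = O(\lg^{1/2} n)$ with $O(\lg n)$-bit entries; Lemma~\ref{lem:smallpartial} with $\epsilon = 1/2$ encodes each such sequence in $O(\lg^{3/2} n)$ bits, and the required universal table is shared across all nodes and costs only $o(n)$ bits once. Multiplying then gives a total of $O(n \lg \sigma \lg \lg n / \lg n)$ bits for all internal nodes, which is strictly lower order.

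The step I expect to be the most delicate is (iv), since it is the one that will actually saturate the error term. I plan to invoke Lemma~\ref{lem:cspsi} on $E$ with $d = \sigma$, with sequence length equal to $t$, and with entry size $k = O(\lg n)$ (since $E_\alpha[i] \le 2L$). The hypothesis $d = O(\lg^\eta n)$ for some constant $\eta < 1$ is exactly what the section's standing assumption $\sigma = O(\sqrt{\lg n})$ buys us — take $\eta = 1/2$ — and this is the one place where the small-alphabet restriction is essential, so it is the condition I will check carefully. Lemma~\ref{lem:cspsi} then yields $O(k d t + w) = O(\lg n \cdot \sqrt{\lg n} \cdot n \lg \sigma \lg \lg n / \lg^2 n) + O(\lg n) = O(n \lg \sigma \lg \lg n / \sqrt{\lg n})$ bits, absorbing the $O(w)$ additive term under the $w = \Theta(\lg n)$ assumption stated just before the lemma. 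Summing (i)--(iv) will then produce the claimed bound, and no further obstacle remains beyond correctly tracking the chain of substitutions in this final calculation.
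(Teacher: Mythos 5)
Your decomposition into raw characters, block-list overhead, B-tree internal nodes, and the CSPSI structure $E$, together with the resulting bounds, matches the paper's proof essentially term for term (the paper counts all $O(t)$ B-tree nodes at $O(\lg^{3/2}n)$ bits each rather than only the $O(t/b)$ internal ones, but both land under the target). The one item you omit is the $O(t\lg t)$-bit buffer that Lemma~\ref{lem:cspsi} charges for $E$ (with $t$ the number of superblocks); this is $O(n\lg\sigma\lg\lg n/\lg n)$ bits, hence absorbed by the error term, so the conclusion is unaffected.
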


%\begin{proof}

%\end{proof}

%***
%\subsection{Supporting $\accessop$, $\rankop$ and $\selop$}
\paragraph{Supporting $\accessop$, $\rankop$ and $\selop$.} We first show how to use our data structures to support query operations. 
\begin{lemma}
\label{lem:stringquery}
The data structures in this section can support $\accessop$, $\rankop$ and $\selop$ in $O(\frac{\lg n}{\lg\lg n})$ time with an additional universal table of $O(\sqrt{n}\polylog (n))$ bits. 
%The data structures in Section~\ref{sec:stringds} can support $\accessop$, $\rankop$ and $\selop$ in $O(\frac{\lg n}{\lg\lg n})$ time with an additional universal table of $O(\sqrt{n}\polylog (n))$ bits. 
\end{lemma}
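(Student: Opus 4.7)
The plan is to support each operation by combining a top-down B-tree descent with in-superblock scanning using a precomputed universal table, following the template already used for the CSPSI structure in Lemma~\ref{lem:cspsisum}. Since the B-tree has degree between $b=\sqrt{\lg n}$ and $2b$, its height is $O(\lg n/\lg\lg n)$, and by Lemma~\ref{lem:smallpartial} each node visit costs $O(1)$ time. Each superblock holds at most $2L=O(\lg^2 n/\lg\lg n)$ bits, so it can be swept in $O(\lg n/\lg\lg n)$ chunks of $\lceil\tfrac12\lg n\rceil$ bits using table lookup. The universal table $A_2$ I would introduce is indexed by a bit string $b$ of length $\lceil\tfrac12\lg n\rceil$, a character $\alpha\in[1,\sigma]$, and a position $g$ up to $\lfloor\lceil\tfrac12\lg n\rceil/\lceil\lg\sigma\rceil\rfloor$; the entry $A_2[b][\alpha][g]$ stores the number of occurrences of $\alpha$ in the first $g$ characters encoded by $b$ together with the identity of each such character. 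Because $\sigma=O(\sqrt{\lg n})$, this table occupies $O(\sqrt{n}\cdot\polylog(n))$ bits as claimed.

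For $\accessop(S,i)$, I would descend the B-tree from the root, at each internal node $v$ using $\searchop(I(v),i)$ to pick the child whose subtree contains position $i$, and subtracting $\sumop(I(v),c-1)$ from $i$ so that on arrival at the leaf, $i$ is the offset inside its superblock. Following the linked list of blocks for that superblock to the right block takes $O(\sqrt{\lg n}/\lg\lg n)$ time, and a single lookup in $A_2$ returns the desired character. For $\rankop_\alpha(S,i)$, I would perform the same descent but additionally track, along the way, the index $t$ of the superblock reached (using $\sumop(U(v),c-1)$ to accumulate superblocks skipped in left siblings). The number of $\alpha$'s preceding the superblock is then $\sumop(E,\alpha,t-1)$, computable in $O(\lg n/\lg\lg n)$ time by Lemma~\ref{lem:cspsi}. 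Inside the superblock I would sweep chunks of $\lceil\tfrac12\lg n\rceil$ bits using $A_2$ to accumulate the count of $\alpha$ up to offset $i$, using the two-dimensional structure of $A_2$ on the final partial chunk exactly as in the proof of Lemma~\ref{lem:cspsisum}.

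For $\selop_\alpha(S,i)$, I would use $E$ in reverse: compute $t=\searchop(E,\alpha,i)$ to locate the superblock holding the $i$\textsuperscript{th} occurrence of $\alpha$ (in $O(\lg n/\lg\lg n)$ time by Lemma~\ref{lem:cspsi}), and set $i'=i-\sumop(E,\alpha,t-1)$, the rank of the target occurrence within that superblock. To find the leaf whose superblock has rank $t$, I would descend the B-tree from the root, at each node $v$ using $\searchop(U(v),t)$ and subtracting $\sumop(U(v),c-1)$ from $t$, accumulating along the way (via $I(v)$) the number of characters stored in leaves to the left of the target leaf. Finally, I would scan the superblock chunk-by-chunk, querying $A_2$ with the running count to locate the chunk containing the $i'$\textsuperscript{th} $\alpha$, and then use the per-position slice of $A_2$ to pinpoint its exact offset within the chunk; adding this offset to the prefix character count yields the final position.

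The main obstacle is ensuring the per-superblock scan remains $O(\lg n/\lg\lg n)$ for both $\rankop$ and $\selop$. For $\rankop$ the finishing partial chunk must be handled carefully to avoid overshooting offset $i$, which is why $A_2$ is indexed by a third coordinate $g$ giving prefix counts; for $\selop$ the complementary issue is identifying the single chunk that pushes the running count past $i'$, which the same two-dimensional table resolves in $O(1)$ per chunk. Once these in-chunk lookups are in place, the time bound follows by adding the $O(\lg n/\lg\lg n)$ descent, the $O(\lg n/\lg\lg n)$ CSPSI call, and the $O(\lg n/\lg\lg n)$ superblock sweep, and the space bound follows from the $O(\sqrt{n}\polylog(n))$-bit size of $A_2$.
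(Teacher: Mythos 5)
Your proposal matches the paper's own proof essentially step for step: a $U(v)$/$I(v)$-guided B-tree descent of height $O(\lg n/\lg\lg n)$ with $O(1)$ work per node by Lemma~\ref{lem:smallpartial}, the CSPSI structure $E$ to handle the cross-superblock component of $\rankop$ and $\selop$ in $O(\lg n/\lg\lg n)$ time by Lemma~\ref{lem:cspsi}, and a $\lceil\frac{1}{2}\lg n\rceil$-bit-chunk sweep of the superblock using the universal table $A_2$ to finish. The only cosmetic differences are that the paper reads the desired character for $\accessop$ directly from the block rather than via $A_2$, and it makes explicit that the final in-chunk step of $\selop$ is an $O(\lg\lg n)$ binary search over the position index of $A_2$.
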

\begin{proof}
To support $\accessop(S, i)$, we perform a top-down traversal in the B-tree to find the leaf containing $S[i]$. 
During this traversal, at each internal node $v$, we perform $\searchop$ on $I(v)$ to decide which child to traverse, and perform $\sumop$ on $I(v)$ to update the value $i$. 
When we find the leaf, we follow  
%$O(\frac{\sqrt{\lg n}}{\lg\lg n})$ 
the pointers to find the block containing the position we are looking for, and then retrieve the corresponding character in constant time. 
Thus $\accessop$ takes $O(\frac{\lg n}{\lg\lg n})$ time. 

To compute $\rankop_{\alpha}(S, i)$, we first locate the leaf containing position $i$ using the same process for $\accessop$. 
Let $j$ be the number of the superblock contained in this leaf, which can be computed using $U(v)$ during the top-down traversal. 
Then $\sumop(E, \alpha, j-1)$ is the number of occurrences of $\alpha$ in superblocks $1, 2, \cdots j-1$, which can be computed in $O(\frac{\lg n}{\lg\lg n})$ time by Lemma~\ref{lem:cspsi}. 
To compute the number of occurrences of $\alpha$ up to position $i$ inside superblock $j$, we read the content of this superblock in chunks of size $\lceil\frac{1}{2}\lg n \rceil$ bits. 
As with the support for $\sumop$ in the proof of Lemma~\ref{lem:cspsisum}, this can be done in $O(\frac{\lg n}{\lg\lg n})$ time using a precomputed two-dimensional table $A_2$ of $O(\sqrt{n}\polylog (n))$ bits. 
%***
%This is how we support $\rankop$ in $O(\frac{\lg n}{\lg\lg n})$ time. 
Our support for $\selop_{\alpha}(S, i)$ is similar (See Appendix~\ref{app:stringselect}). \qed
\end{proof}

%***
%\subsection{Supporting $\insertop$ and $\deleteop$}
\paragraph{Supporting $\insertop$ and $\deleteop$.} To support $\insertop$ and $\deleteop$, we first show how to support them when $w = \Theta(\lg n)$. 
%***
%We then consider the more general case in which $w = \Omega(\lg n)$. 
Careful tuning of the techniques for supporting insertions and deletions for the CSPSI problem yields the following lemma, whose proof is in Appendix~\ref{app:stringdelete}: 

\begin{lemma}
\label{lem:stringtheta}
When $w = \Theta(\lg n)$, the data structures in this section can support {\insertop} and {\deleteop} in $O(\frac{\lg n}{\lg\lg n})$ amortized time. 
%***
%When $w = \Theta(\lg n)$, the data structures in Section~\ref{sec:stringds} can support {\insertop} and {\deleteop} in $O(\frac{\lg n}{\lg\lg n})$ time. 
\end{lemma}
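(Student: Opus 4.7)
The plan is to parallel the proof of Lemma~\ref{lem:cspsiinsert}, adding two ingredients specific to the string setting: every character change must be reflected in the auxiliary CSPSI structure $E$, and the ``no two consecutive skinny superblocks'' invariant must be maintained in place of the simple size-range invariant used there. Since $\sigma = O(\sqrt{\lg n})$ satisfies the precondition $d = O(\lg^{\eta} n)$ of Lemma~\ref{lem:cspsi}, each call to $E$'s operations costs $O(\lg n / \lg\lg n)$, which matches the target bound and lets us afford a constant number of them per update.

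To perform $\insertop_{\alpha}(S,i)$, I would first traverse the B-tree top-down as in $\accessop$, updating $I(v)$ by $+\lg\sigma$ on the way down and reading off the destination superblock index $j$ from the $U(v)$'s. Then I would rewrite the affected superblock in chunks of $\Theta(\lg n)$ bits, shifting the tail by $\lg\sigma$ bits and inserting $\alpha$; since a superblock has $O(\lg^2 n / \lg\lg n)$ bits, this costs $O(\lg n / \lg\lg n)$. A single $\updateop$ on $E$ then increments $E_{\alpha}[j]$. The operation $\deleteop(S,i)$ is symmetric: read $\alpha = S[i]$ first, shift the tail left, and decrement $E_{\alpha}[j]$. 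When an insertion causes the superblock to exceed $2L$ bits, I would split it at a character boundary into two nearly equal halves, add a new leaf to the B-tree, and open a column of zeros in every sequence of $E$ via one $\insertop(E, j{+}1)$ followed by $O(\sigma)$ updates to repopulate $E_{\alpha}[j]$ and $E_{\alpha}[j{+}1]$; any internal-node overflow then propagates upward analogously to Lemma~\ref{lem:cspsiinsert}, with each affected $U(v), I(v)$ rebuilt in $O(b \lg n) = o(\lg n / \lg\lg n)$ time using Lemma~\ref{lem:smallpartial}. When a deletion creates two adjacent skinny superblocks, I would merge them by zeroing out the extra column of $E$ and issuing one $\deleteop(E, \cdot)$, propagating any resulting internal-node underflow upward in the usual B-tree manner.

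The step I expect to be the main obstacle is the amortized analysis, and it is dispatched by the same charging argument as in Lemma~\ref{lem:cspsiinsert}: a leaf can split (respectively merge) only after $\Omega(L / \lg\sigma)$ insertions (respectively deletions) have occurred at that leaf since its last structural change, which easily pays for the split or merge itself and for any cascade of internal-node rebalancings at amortized $O(1)$ per update. The occasional $o(n)$-time rebuild of the universal tables (including $A_2$ from Lemma~\ref{lem:stringquery}) triggered by a change in $\lceil \lg n \rceil$ is charged to the $\Theta(n)$ updates required to trigger it, exactly as in the CSPSI case, and this works precisely because $w = \Theta(\lg n)$ so that $L$ and the block size shift only by a constant factor. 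Combining the $O(\lg n/\lg\lg n)$ cost of rewriting the leaf, the $O(\lg n/\lg\lg n)$ cost of the one CSPSI update, and the amortized $o(\lg n/\lg\lg n)$ structural cost yields the claimed $O(\lg n / \lg\lg n)$ amortized bound for both $\insertop$ and $\deleteop$.
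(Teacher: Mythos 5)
The proposal tracks the paper in its high-level outline (top-down traversal updating $I(v)$, rewrite the superblock in word-size chunks, update $E$, amortize the B-tree rebalancing), but it diverges on the one design decision the paper makes specifically to accommodate the CSPSI structure, and that divergence breaks the time bound as you have stated it.

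You propose to split an overflowing superblock ``into two nearly equal halves'' and then ``open a column of zeros in every sequence of $E$ \ldots followed by $O(\sigma)$ updates to repopulate $E_{\alpha}[j]$ and $E_{\alpha}[j+1]$.'' Recall, however, that Lemma~\ref{lem:cspsi} only guarantees $\updateop(C,j,i,\delta)$ under the restriction $|\delta|\le\lg n$. After a balanced split, the count of a single character that must be transferred from column $j$ to the new column can be as large as $\Theta(L/\lg\sigma)=\Theta(\lg^2 n/(\lg\sigma\lg\lg n))$, which is $\omega(\lg n)$. So a single $\updateop$ per character class does not suffice: you would have to decompose each large change into $\Theta(\lg n/(\lg\sigma\lg\lg n))$ updates of magnitude $\lg n$, and your claim of ``$O(\sigma)$ updates'' is simply wrong. (The amortization could in fact still be patched to absorb the extra work, but your write-up neither notices the $|\delta|$ constraint nor carries out that harder calculation.) The paper avoids the whole issue by splitting a full superblock \emph{asymmetrically}: the new left leaf receives exactly one character and the right leaf keeps the rest, so that the only changes to $E$ are one $\insertop(E,j)$ plus a constant number of $\updateop$'s of magnitude $1$. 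This is the load-bearing idea of the paper's proof of Lemma~\ref{lem:stringtheta}, and it also sets up the Fleischer-style deamortization used later; your proposal omits it.

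The same problem recurs in your handling of $\deleteop$: you propose to \emph{merge} two adjacent skinny superblocks, which again requires zeroing out a full column of $E$ whose entries can be $\Theta(L/\lg\sigma)=\omega(\lg n)$, and hence cannot be done with the constant number of bounded-magnitude updates you allot. Merging also reopens the classic split/merge thrashing concern for your balanced split, which your $\Omega(L/\lg\sigma)$-operations-between-structural-changes claim does not clearly rule out when a merged leaf sits just below $2L$ bits. The paper never merges leaves at all: on a deletion it either removes an empty superblock with a single $\deleteop(E,j)$, or, to restore the ``no two consecutive skinny superblocks'' invariant, it moves \emph{one} character from a skinny neighbor, which again only touches $E$ with $O(1)$ updates of magnitude $1$. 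In short, the parts of your proposal that are genuinely new (balanced split, leaf merging) are precisely the parts that conflict with the $|\delta|\le\lg n$ precondition of the CSPSI structure, and the paper's design of the insertion and deletion routines exists specifically to keep every interaction with $E$ down to $O(1)$ unit-magnitude operations.
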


%***
%We now extend our result to the more general case in which $w = \Omega(\lg n)$. 
%This is more difficult than the static case. 
%For a static data structure, since its size does not change, we can always store it continuously in the memory, and thus the number of bits required to encode a pointer only depends on the size of the data structure, which is independent of $w$. 
%Thus normally, space bounds of static data structures designed for the case in which $w=\Theta(\lg n)$ usually remain the same when $w = \Omega(\lg n)$. 
%However, it is not the same for a typical dynamic data structure. 
%Since the size of dynamic data structures changes, their components are stored in different places in the memory, and thus pointers of size $w$ are needed to record addresses in memory. 
%When $n$ is small enough that $w = \omega(\lg n)$, these pointers may occupy too much space. 

To deamortize the support for {\insertop} and {\deleteop}, we cannot directly use the global rebuilding approach of Overmars and van Leeuwen~\cite{ol1981} here, since we do not want to increase the space cost of our data structures by a constant factor, and the use of a buffer is also unacceptable. 
Instead, we design an approach that deamortizes the support for {\insertop} and {\deleteop} completely, and differently from the original global rebuilding approach of Overmars and van Leeuwen~\cite{ol1981}, our approach neither increases the space bounds asymptotically, nor requires any buffer. 
We thus call our approach {\em succinct global rebuilding}. 
This approach still requires us to modify the algorithms for {\insertop} and {\deleteop} so that after $c_1 n$ insertions ($c_1 > 0$) and $c_2 n$ deletions ($0<c_2<1$), a query operation can still be supported in $O(\frac{\lg n}{\lg\lg n})$ time. 
We also start the rebuilding process when the number of {\insertop} and {\deleteop} operations performed exceeds half the initial length of the string stored in the data structure. 
The main difference between our approach and the original approach in \cite{ol1981} is that during the process of rebuilding, we never store two copies of the same data, i.e. the string $S$. 
Instead, our new structure stores a prefix, $S_p$, of $S$, and the old data structure stores a suffix, $S_s$, of $S$. 
During the rebuilding process, each time we perform an insertion or deletion, we perform such an operation on either $S_p$ or $S_s$. 
After that, we remove the first several characters from $S_s$, and append them to $S_p$. 
By choosing parameters and tuning our algorithm carefully, we can finish rebuilding after we perform at most $n_0 / 3$ update operations where $n_0$ is the length of $S$ when we start rebuilding. 
%We also use the strategy in the proof of Lemma~\ref{lem:cspsi} that stores the internal nodes and $E$ in a constant number of big memory blocks, so that they can be deallocated fast. 
During the rebuilding process, we can use both old and new data structures to answer each query in $O(\frac{\lg n}{\lg\lg n})$ time. 
All the details can be found in Appendix~\ref{app:succinctrebuild}. 

To reduce the space overhead when $w = \omega(\lg n)$, we adapt the approach of M{\"a}kinen and Navarro~\cite{mn2008} (See Appendix~\ref{app:lgnchange}).  
We finally use the approach of Gonz\'{a}lez and Navarro~\cite{gn2009} to compress our representation. 
Since their approach is only applied to the superblocks (i.e. it does not matter what kind of tree structures are used, since additional tree arrangement operations are not required when the number of bits stored in a leaf is increased due to a single update in their solution), we can use it here directly. 
%The only difference is the superblock size. 
%***
%choice of superblock size. 
%Thus the following lemma directly follows: 
Thus we immediately have: 
\begin{lemma}
\label{lem:smallalphabet}
Under the word RAM model with word size $w=\Omega(\lg n)$, a string $S$ of length $n$ over an alphabet of size $\sigma = O(\sqrt{\lg n})$ can be represented using $nH_0 + O(\frac{n\lg\sigma\lg\lg n}{\sqrt{\lg n}}) + O(w)$ bits to support $\accessop$, $\rankop$, $\selop$, $\insertop$ and $\deleteop$ in $O(\frac{\lg n}{\lg\lg n})$ time. 
\end{lemma}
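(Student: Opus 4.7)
The plan is to combine the structures developed throughout this section with the three auxiliary techniques discussed just above the lemma statement into a single representation meeting all the stated bounds. First I would take the B-tree/CSPSI-based layout from Section~\ref{sec:stringds}, together with Lemma~\ref{lem:stringsize} for the raw space bound $n\lg\sigma + O(\tfrac{n\lg\sigma\lg\lg n}{\sqrt{\lg n}})$ and Lemma~\ref{lem:stringquery} for $\accessop$, $\rankop$ and $\selop$ in $O(\tfrac{\lg n}{\lg\lg n})$ time. Lemma~\ref{lem:stringtheta} then gives $O(\tfrac{\lg n}{\lg\lg n})$ amortized time for $\insertop$ and $\deleteop$ under the assumption $w=\Theta(\lg n)$. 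Each of these three ingredients is already established, so the remaining work is to (i) deamortize the updates, (ii) remove the $w=\Theta(\lg n)$ restriction, and (iii) compress the $n\lg\sigma$ leading term down to $nH_0$.

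For (i), I would invoke the succinct global rebuilding approach sketched above the lemma: start rebuilding a new B-tree incrementally once the number of updates since the last rebuild exceeds half the current length $n_0$, and during the rebuild maintain the invariant that the new structure holds a prefix $S_p$ of $S$ while the old one holds the disjoint suffix $S_s$, transferring a constant number of characters from $S_s$ to $S_p$ with every update so that the rebuild completes within $n_0/3$ further operations. Following the same scheme as in the CSPSI deamortization of Lemma~\ref{lem:cspsi}, I would modify the update algorithms so that after $c_1 n$ insertions and $c_2 n$ deletions ($c_1>0$, $0<c_2<1$) performed without rebalancing, queries still cost $O(\tfrac{\lg n}{\lg\lg n})$ time; each query during the rebuild is then routed to whichever of $S_p$ or $S_s$ contains the requested position and answered on that structure. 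Because no duplicate copy of $S$ is ever stored, this step neither inflates the asymptotic space nor requires any buffer.

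For (ii), I would apply the adaptation of M\"akinen and Navarro~\cite{mn2008} mentioned above: allocate a memory block whose size is proportional to the current structure so that internal pointers remain $O(\lg n)$ bits, and store the $O(1)$ base addresses in $O(w)$-bit words, contributing only an additive $O(w)$ term to the space. For (iii), I would plug in the superblock-level compression of Gonz\'{a}lez and Navarro~\cite{gn2009}; as they note, it touches only the contents of each leaf and never triggers extra rebalancing, so it interacts cleanly with our B-tree and replaces the $n\lg\sigma$ term with $nH_0$ while leaving the $O(\tfrac{n\lg\sigma\lg\lg n}{\sqrt{\lg n}})$ overhead and every time bound unchanged. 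Assembling the four ingredients yields space $nH_0+O(\tfrac{n\lg\sigma\lg\lg n}{\sqrt{\lg n}})+O(w)$ bits and $O(\tfrac{\lg n}{\lg\lg n})$ time for all five operations.

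The main obstacle, as the authors themselves flag, is the deamortization in step (i): one has to check that the prefix/suffix invariant can be maintained while updates are dispatched into either $S_p$ or $S_s$ depending on their position, the per-update character-transfer rate keeps pace with the rebuild deadline of $n_0/3$ operations, and queries remain within $O(\tfrac{\lg n}{\lg\lg n})$ time on both the old and partially-built new structures simultaneously. The word-size adaptation and the superblock compression, by contrast, are essentially plug-in steps and need only a verification that they do not disturb the B-tree invariants or the universal tables $A_1$ and $A_2$ used in Lemmas~\ref{lem:cspsisum} and~\ref{lem:stringquery}.
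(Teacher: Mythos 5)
Your proposal is correct and tracks the paper's own argument step for step: it assembles Lemmas~\ref{lem:stringsize}, \ref{lem:stringquery}, and \ref{lem:stringtheta}, deamortizes via the same ``succinct global rebuilding'' scheme (prefix $S_p$ in the new structure, suffix $S_s$ in the old, transferring a constant number of characters per update so the rebuild finishes within $n_0/3$ operations), handles $w=\omega(\lg n)$ by the M\"akinen--Navarro memory-management adaptation, and applies the Gonz\'alez--Navarro superblock compression to replace $n\lg\sigma$ by $nH_0$. The only thing glossed over is that for the dominant $n\lg\sigma$-bit superblock data the paper cannot afford the constant-factor waste of a single over-allocated memory block and instead uses a $\sqrt{m/w}$-chunk scheme wasting only $O(\sqrt{mw})$ bits; but you flag this step as a plug-in to be verified, so the plan is essentially the paper's.
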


\section{Strings over General Alphabets}
\label{sec:generalalphabet}

We follow the approach of Ferragina~\etal~\cite{fmmn2007} that uses a generalized wavelet tree to extend results on strings over small alphabets to general alphabets. 
Special care must be taken to avoid increasing the $O(w)$-bit term in Lemma~\ref{lem:smallalphabet} asymptotically. 
%To avoid increasing the $O(w)$-bit term in Lemma~\ref{lem:lgnchange} asymptotically, we also use the strategy of M{\"a}kinen and Navarro~\cite{mn2008}. 
We now present our main result (see Appendix~\ref{app:string} for its proof): 
\begin{theorem}
\label{thm:string}
Under the word RAM model with word size $w=\Omega(\lg n)$, a string $S$ of length $n$ over an alphabet of size $\sigma$ can be represented using $nH_0 + O(\frac{n\lg\sigma\lg\lg n}{\sqrt{\lg n}}) \linebreak + O(w)$ bits to support $\accessop$, $\rankop$, $\selop$, $\insertop$ and $\deleteop$ operations in $O(\frac{\lg n}{\lg\lg n}(\frac{\lg \sigma}{\lg\lg n}+1))$ time. 
When $\sigma = O(\polylog(n))$, all the operations can be supported in $O(\frac{\lg n}{\lg\lg n})$ time. 
\end{theorem}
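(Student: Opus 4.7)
The plan is to lift Lemma~\ref{lem:smallalphabet} to general alphabets via a multi-ary wavelet tree with branching factor $b=\lceil\sqrt{\lg n}\rceil$, following the framework of Ferragina~\etal. Each internal node $v$ holds a string $S_v$ over the alphabet $\{1,\ldots,b\}$ that records, for every position in the substring of $S$ routed through $v$, which of $v$'s children the corresponding character belongs to. The string at each node is represented using the small-alphabet structure of Lemma~\ref{lem:smallalphabet}, whose hypothesis on alphabet size is met since $b=O(\sqrt{\lg n})$. The tree has height $\lceil\log_b\sigma\rceil=O(\lg\sigma/\lg\lg n+1)$; the ``$+1$'' in the stated time bound accounts for the case $\sigma=O(\polylog(n))$, in which the height collapses to $O(1)$ and the time reduces to $O(\lg n/\lg\lg n)$.

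For the five operations I would use the standard multi-ary wavelet-tree algorithms. \accessop\ and \rankop\ descend from the root along the path dictated by the base-$b$ digits of the character in question, invoking \accessop\ or \rankop\ on the node's string once per level; \selop\ ascends analogously from the appropriate leaf with one \selop\ call per level. \insertop\ and \deleteop\ propagate the update along the relevant root-to-leaf path, performing a single call to the update routines of Lemma~\ref{lem:smallalphabet} at each visited node. Because each call runs in $O(\lg n/\lg\lg n)$ time and the path length is $O(\lg\sigma/\lg\lg n+1)$, every operation takes $O(\frac{\lg n}{\lg\lg n}(\frac{\lg\sigma}{\lg\lg n}+1))$ time, as required.

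For the space analysis I would invoke the standard multi-ary entropy identity $\sum_v |S_v|H_0(S_v)=nH_0(S)$, so the leading term of Lemma~\ref{lem:smallalphabet}, summed over all internal nodes, yields exactly $nH_0$. Since $\sum_v |S_v|=O(n\log_b\sigma)=O(n\lg\sigma/\lg\lg n)$ and $\lg b=\Theta(\lg\lg n)$, the sub-leading terms $O(|S_v|\lg b\lg\lg n/\sqrt{\lg n})$ sum to $O(n\lg\sigma\lg\lg n/\sqrt{\lg n})$, matching the bound. The main obstacle, explicitly flagged in the excerpt, is the $O(w)$ additive term from each node: the tree has $\Theta(\sigma/b)$ internal nodes, so a naive implementation would pay $\Theta(\sigma w/b)$ bits just for the machine-word-sized base addresses of the individual per-node structures, potentially overwhelming the stated $O(w)$ bound when $w=\omega(\lg n)$. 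I would resolve this exactly as in Lemma~\ref{lem:cspsi}: allocate a single shared memory pool containing the representations of every node and access them through $O(\lg n)$-bit offsets rather than $O(w)$-bit pointers, so that only a constant number of $w$-bit words are needed to record the base address(es) of the pool(s). The universal tables $A_1$ and $A_2$ are shared across all nodes and therefore paid for just once, and the succinct global rebuilding machinery is likewise invoked globally rather than per node. These ingredients together keep the additive overhead at $O(w)$ and establish the theorem.
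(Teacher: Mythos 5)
Your proof is correct and follows essentially the same route as the paper: a multi-ary wavelet tree with branching factor $\Theta(\sqrt{\lg n})$, small-alphabet strings from Lemma~\ref{lem:smallalphabet} at each node, the standard entropy identity for the leading term, and a shared memory layout with $O(\lg n)$-bit offsets (plus shared tables and a single global-rebuilding schedule) to keep the additive overhead at $O(w)$. The only cosmetic deviation is that you store one small-alphabet string per internal node, whereas the paper (following Ferragina~\etal) concatenates each level into a single string over alphabet $\{1,\ldots,q\}$, which keeps the number of separately managed dynamic strings down to $O(\lg\sigma/\lg\lg n)$ rather than $\Theta(\sigma/b)$ and thereby simplifies the shared-memory bookkeeping; both variants give the stated bounds.
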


The following corollary is immediate:
\begin{corollary}
Under the word RAM model with word size $w=\Omega(\lg n)$, a bit vector of length $n$ can be represented using $nH_0 + o(n) + O(w)$ bits to support $\accessop$, $\rankop$, $\selop$, $\insertop$ and $\deleteop$ in $O(\frac{\lg n}{\lg\lg n})$ time. 
\end{corollary}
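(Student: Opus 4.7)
The plan is to instantiate Theorem~\ref{thm:string} with $\sigma = 2$, since a bit vector is precisely a string over an alphabet of size $2$, and the five operations $\accessop$, $\rankop$, $\selop$, $\insertop$, $\deleteop$ in the corollary have exactly the same semantics as in Section~\ref{sec:intro}. Thus the entire proof amounts to specializing the space and time bounds of the theorem and verifying that they simplify in the claimed way.

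For the space bound, substituting $\sigma=2$ into $nH_0 + O(\frac{n\lg\sigma\lg\lg n}{\sqrt{\lg n}}) + O(w)$ yields $nH_0 + O(\frac{n\lg\lg n}{\sqrt{\lg n}}) + O(w)$. Since $\frac{\lg\lg n}{\sqrt{\lg n}} = o(1)$, the middle term is $o(n)$, so the total is $nH_0 + o(n) + O(w)$ bits, as required. For the time bound, the condition $\sigma = O(\polylog(n))$ is trivially satisfied by $\sigma = 2$, so the second clause of Theorem~\ref{thm:string} applies and gives $O(\frac{\lg n}{\lg\lg n})$ time for each operation.

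There is essentially no obstacle here, since the corollary is a direct specialization; the only thing to be careful about is ensuring that the $O(w)$ additive term and the choice of buffer/memory-block sizes in the underlying construction do not degrade for the special case $\sigma = 2$, but none of the arguments in Sections~\ref{sec:smallalphabet} and \ref{sec:generalalphabet} require $\sigma \geq 3$, so everything carries through unchanged. It is worth noting in the statement, as the authors already do in Section~1.2, that the resulting $O(\frac{\lg n}{\lg\lg n})$ bound matches the lower bound of \cite{fs1989}, so this specialization is optimal.
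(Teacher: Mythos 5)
Your proof is correct and follows exactly the paper's approach: the corollary is presented as an immediate specialization of Theorem~\ref{thm:string} with $\sigma=2$, and your substitution of $\lg\sigma=1$ into the space bound and observation that $\sigma=2=O(\polylog(n))$ triggers the faster time bound is precisely what the paper intends by calling it ``immediate.''
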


%The following two corollaries are immediate:
%\begin{corollary}
%Under the word RAM model with word size $w=\Omega(\lg n)$, a string $S$ of length $n$ over an alphabet of size $\sigma = O(\polylog(n))$ can be represented using $nH_0 + O(\frac{n\lg\sigma\lg\lg n}{\sqrt{\lg n}}) + O(w)$ bits to support $\accessop$, $\rankop$, $\selop$, $\insertop$ and $\deleteop$ in $O(\frac{\lg n}{\lg\lg n})$ time. 
%\end{corollary}
%\begin{corollary}
%Under the word RAM model with word size $w=\Omega(\lg n)$, a bit vector of length $n$ can be represented using $nH_0 + o(n) + O(w)$ bits to support $\accessop$, $\rankop$, $\selop$, $\insertop$ and $\deleteop$ in $O(\frac{\lg n}{\lg\lg n})$ time. 
%\end{corollary}

\section{Applications}
%***
%\subsection{Dynamic Text Collection}
\paragraph{Dynamic Text Collection.} M{\"a}kinen and Navarro~\cite{mn2008} showed how to use a dynamic string structure to index a text collection $N$ which is a set of text strings to support string search. 
%which is a set of text strings to support substring search, inserting a text string into the collection, and deleting a text string from the collection. 
For this problem, $n$ denotes the length of the text collection $N$ when represented as a single string that is the concatenations of all the text strings in the collection (a separator is inserted between texts). 
A key structure for their solution is a dynamic string. 
Gonz\'{a}lez and Navarro~\cite{gn2009} improved this result in~\cite{mn2008} by improving the string representation. 
If we use our string structure, we directly have the following lemma, which improves the running time of the operations over the data structure in~\cite{gn2009} by a factor of $\lg\lg n$: 
\begin{theorem}
Under the word RAM model with word size $w=\Omega(\lg n)$, a text collection $N$ of size $n$ consisting of $m$ text strings over an alphabet of size $\sigma$ can be represented in $nH_h+o(n)\cdot\lg\sigma+O(m\lg n+w)$ bits\footnote{$H_h$ is the $h$\textsuperscript{th}-order entropy of the text collection when represented as a single string.} for any $h \le (\alpha\log_{\alpha} n) -1$ and any constant $0<\alpha<1$ to support: 
\begin{itemize}
\item the counting of the number of occurrences of a given query substring $P$ in $N$ in $O(\frac{|P|\lg n}{\lg\lg n}(\frac{\lg \sigma}{\lg\lg n}+1))$ time;
\item After counting, the locating of each occurrence in $O(\lg^2 n (\frac{1}{\lg\lg n}+\frac{1}{\lg \sigma}))$ time; 
\item Inserting and deleting a text $T$ in $O(\frac{|T|\lg n}{\lg\lg n}(\frac{\lg \sigma}{\lg\lg n}+1))$ time; 
\item Displaying any substring of length $l$ of any text string in $N$ in $O(\lg^2 n (\frac{1}{\lg\lg n}+\frac{1}{\lg \sigma}) + \frac{l\lg n}{\lg\lg n}(\frac{\lg \sigma}{\lg\lg n}+1))$ time. 
\end{itemize}
\end{theorem}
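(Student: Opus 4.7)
The plan is to plug our Theorem~\ref{thm:string} and its Corollary into the dynamic text collection framework of M\"akinen and Navarro~\cite{mn2008} as refined by Gonz\'alez and Navarro~\cite{gn2009}, and then verify that the time bounds scale as claimed and that the space bound does not blow up. In that framework the collection $N$ is represented by a dynamic BWT stored as a string over the alphabet of $N$ (augmented with a terminator), together with (i)~an auxiliary dynamic bit vector of length $n$ with $m$ ones that marks the text boundaries (used to identify which text a BWT position belongs to, and to support the insertion/deletion of whole texts), and (ii)~a sparse sampling of suffix array values stored with the text boundaries, used for locate and display. Counting is realized by $|P|$ backward-search steps, each of which costs $O(1)$ rank queries on the dynamic BWT; each LF-step for locate or display likewise costs $O(1)$ rank/select/access calls on the BWT plus $O(1)$ operations on the bit vector; and inserting or deleting a text $T$ amounts to $O(|T|)$ updates on the BWT and the bit vector, as shown in \cite{mn2008,gn2009}.

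Substituting our structures then gives the time bounds almost mechanically. Each BWT operation costs $O(\frac{\lg n}{\lg\lg n}(\frac{\lg\sigma}{\lg\lg n}+1))$ by Theorem~\ref{thm:string}, and each operation on the boundary bit vector costs $O(\frac{\lg n}{\lg\lg n})$ by the Corollary, which is dominated by the BWT cost. Multiplying by $|P|$ yields the counting bound; multiplying by the sampling rate (chosen as $\Theta(\lg n/\lg\sigma)$, as in \cite{gn2009}) yields the locate bound $O(\lg^2 n(\frac{1}{\lg\lg n}+\frac{1}{\lg\sigma}))$; and multiplying by $|T|$ or $l$ yields the insertion/deletion and display bounds. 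The ranges of $h$ and $\alpha$ come directly from the $h$-th order entropy compression machinery of Gonz\'alez and Navarro~\cite{gn2009}, which compresses the BWT-as-string to $nH_h$ plus lower-order terms; since their compression acts on the superblocks, and our Theorem~\ref{thm:string} already supports exactly this style of superblock organization (cf.\ the use of their compression at the end of Section~\ref{sec:smallalphabet}), the compressed-space guarantee carries over.

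The space accounting is the step that requires a bit of care. The compressed BWT contributes $nH_h+o(n)\cdot\lg\sigma$ bits; the boundary bit vector, having $m$ ones in $n$ positions, contributes $O(m\lg(n/m))+o(n)=O(m\lg n)+o(n)$ bits via the Corollary; the suffix-array samples contribute $O(m\lg n)$ bits; and each of the constantly many dynamic structures we use contributes an $O(w)$ overhead for pointers, memory-block addresses, and the global-rebuilding buffers introduced in Lemma~\ref{lem:cspsi} and Section~\ref{sec:smallalphabet}. The main obstacle is ensuring that these per-structure $O(w)$ overheads aggregate into a single additive $O(w)$ term rather than, say, $O(w\sigma)$; this is handled exactly as in Theorem~\ref{thm:string}, where a constant number of shared memory regions is allocated and indexed by a constant number of words, so that all auxiliary structures share one $O(w)$ bookkeeping budget. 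Combining these contributions yields $nH_h+o(n)\cdot\lg\sigma+O(m\lg n+w)$ bits, completing the proof. \qed
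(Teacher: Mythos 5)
Your proposal is correct and takes essentially the same approach as the paper, which gives only a one-sentence justification (plug the improved dynamic string structure from Theorem~\ref{thm:string} and its Corollary into the Mäkinen--Navarro / González--Navarro dynamic FM-index framework and observe that every query, update, locate, and display operation reduces to a bounded number of rank/select/access/insert/delete calls per step). The only quibble is a small arithmetic slip in the sampling rate: to obtain the stated locate bound $O(\lg^2 n(\frac{1}{\lg\lg n}+\frac{1}{\lg\sigma}))$ from a per-LF-step cost of $O(\frac{\lg n}{\lg\lg n}(\frac{\lg\sigma}{\lg\lg n}+1))$, the sampling rate must be $\Theta(\frac{\lg n\,\lg\lg n}{\lg\sigma})$ rather than $\Theta(\frac{\lg n}{\lg\sigma})$; the rest of the space and time accounting, including the $O(w)$ aggregation via shared memory areas, is sound.
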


%\subsection{Compressed Construction of Text Indexes}
\paragraph{Compressed Construction of Text Indexes.} Researchers have designed space-efficient text indexes whose space is essentially a compressed version of the given text, but the construction of these text indexes may still require a lot of space. 
M{\"a}kinen and Navarro~\cite{mn2008} showed how to use their dynamic string structure to construct a variant of FM-index~\cite{fmmn2004} using as much space as what is required to encode the index. 
This is useful because it allows text indexes of very large text to be constructed when the memory is limited. 
Their result was improved by Gonz\'{a}lez and Navarro~\cite{gn2009}, and the construction time can be further improved by a factor of $\lg\lg n$ using our structure: 
\begin{theorem}
A variant of a FM-index of a text string $T[1..n]$ over an alphabet of size $\sigma$ can be constructed using $nH_h + o(n)\cdot\lg\sigma$ bits of working space in $O(\frac{n\lg n}{\lg\lg n}(\frac{\lg \sigma}{\lg\lg n}+1))$ time for any $h \le (\alpha\log_{\alpha} n) -1$ and any constant $0<\alpha<1$. 
\end{theorem}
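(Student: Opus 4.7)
The plan is to follow the compressed construction framework of M\"akinen and Navarro, as refined by Gonz\'alez and Navarro, and substitute our Theorem \ref{thm:string} for the dynamic string primitive they use. Recall that the construction scans $T$ from right to left and maintains the BWT of the currently processed suffix as a dynamic string $L$; inserting the next character $T[i]$ into the BWT requires $O(1)$ \accessop, \rankop, and \insertop\ operations on $L$ (plus constant-time bookkeeping on a small auxiliary $C$-array). After $n$ such steps, $L$ is the BWT of $T$ and the FM-index variant is obtained by augmenting $L$ with standard sampled structures, all of which fit in $o(n)\cdot\lg\sigma$ bits and can be built on the fly.

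With Theorem \ref{thm:string} plugged in, each of the $O(n)$ update/query operations costs $O(\frac{\lg n}{\lg\lg n}(\frac{\lg\sigma}{\lg\lg n}+1))$, giving the claimed $O(\frac{n\lg n}{\lg\lg n}(\frac{\lg\sigma}{\lg\lg n}+1))$ running time, which is a $\lg\lg n$ factor better than the bound obtained by plugging in the dynamic string of \cite{gn2009}. For the working space, I would argue as in \cite{gn2009}: at every intermediate step $t\le n$ the dynamic string has length $t$, and Theorem \ref{thm:string} gives $tH_0(L_t)+o(t)\cdot\lg\sigma+O(w)$ bits. To upgrade the $H_0$ bound to $H_h$, I would appeal to the standard compression-boosting argument, namely that the zero-order entropy of the final BWT satisfies $nH_0(\mathrm{BWT}(T))\le nH_h(T)+o(n)\cdot\lg\sigma$ for all $h\le(\alpha\log_\sigma n)-1$; during construction the intermediate lengths are bounded by $n$, so the peak working space is dominated by the final step and stays within $nH_h+o(n)\cdot\lg\sigma$.

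The main obstacle is keeping the additive $O(w)$ term and the incremental space overhead under control, since the construction really does build the structure from the empty string and triggers many rounds of the succinct global rebuilding machinery behind Theorem \ref{thm:string}. I would address this by observing that only a constant number of dynamic string/CSPSI structures are ever alive simultaneously in the construction, so their $O(w)$ terms sum to $O(w)$, which is absorbed into the $o(n)\cdot\lg\sigma$ slack (once $n$ is large enough that $w=O(\lg n)\le n/\lg n$, say). The rest of the argument, including the handling of the sampled suffix array and inverse samples needed for the FM-index variant, is identical to \cite{gn2009} and contributes only $o(n)\cdot\lg\sigma$ bits both in the final structure and in the working space, so no new ideas are needed beyond the substitution of our dynamic string.
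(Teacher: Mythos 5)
Your proposal is correct and follows essentially the same approach as the paper: the paper simply states that the theorem follows by substituting Theorem~\ref{thm:string} for the dynamic string primitive in the compressed-construction framework of M\"akinen--Navarro and Gonz\'alez--Navarro, exactly as you do, and the paper does not spell out the BWT-scan, entropy-bounding, or $O(w)$-term bookkeeping details that you supply. Your fleshed-out argument matches the intended proof; the only slip is a harmless typo ($\log_\sigma$ vs.\ the paper's $\log_\alpha$ in the range of $h$).
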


\section{Concluding Remarks}
In this paper, we have designed a succinct representation of dynamic strings that provide more efficient operations than previous results, and we have successfully applied it to improve previous data structures on text indexing. 
As a string structure supporting rank and select is used in the design of succinct representations of many data types, we expect our data structure to play an important role in the future research on succinct dynamic data structures. 
We have also created some new techniques to achieve our results. We particularly think that the approach of succinct global rebuilding is interesting, and expect it to be useful for deamortizing algorithms on other succinct data structures.
%***
%As mentioned in Section~\ref{sec:intro}, a string structure supporting rank and select is used in the design of succinct representations of many data types. Thus we expect our results to have more applications in the design of succinct dynamic data structures. 
%In fact, most succinct static data structures do not have a dynamic succinct version yet, and we believe our data structure will play an important role in the future research on succinct dynamic data structures. 

\bibliographystyle{splncs}
\bibliography{dynamicstring}

\begin{thebibliography}{10}

\bibitem{j1989}
Jacobson, G.:
\newblock Space-efficient static trees and graphs.
\newblock In: FOCS. (1989)  549--554

\bibitem{cm1996}
Clark, D.R., Munro, J.I.:
\newblock Efficient suffix trees on secondary storage.
\newblock In: SODA. (1996)  383--391

\bibitem{rrs2007}
Raman, R., Raman, V., Satti, S.R.:
\newblock Succinct indexable dictionaries with applications to encoding k-ary
  trees, prefix sums and multisets.
\newblock {ACM} Transactions on Algorithms \textbf{3}(4) (2007) ~43

\bibitem{ggv2003}
Grossi, R., Gupta, A., Vitter, J.S.:
\newblock High-order entropy-compressed text indexes.
\newblock In: SODA. (2003)  841--850

\bibitem{bhmr2007}
Barbay, J., He, M., Munro, J.I., Rao, S.S.:
\newblock Succinct indexes for strings, binary relations and multi-labeled
  trees.
\newblock In: SODA. (2007)  680--689

\bibitem{gn2009}
Gonz\'{a}lez, R., Navarro, G.:
\newblock Rank/select on dynamic compressed sequences and applications.
\newblock Theoretical Computer Science \textbf{410}(43) (2009)  4414--4422

\bibitem{bgmr2007}
Barbay, J., Golynski, A., Munro, J.I., Rao, S.S.:
\newblock Adaptive searching in succinctly encoded binary relations and
  tree-structured documents.
\newblock Theoretical Computer Science \textbf{387}(3) (2007)  284--297

\bibitem{mr2001}
Munro, J.I., Raman, V.:
\newblock Succinct representation of balanced parentheses and static trees.
\newblock SIAM Journal on Computing \textbf{31}(3) (2001)  762--776

\bibitem{grr2006}
Geary, R.F., Raman, R., Raman, V.:
\newblock Succinct ordinal trees with level-ancestor queries.
\newblock {ACM} Transactions on Algorithms \textbf{2}(4) (2006)  510--534

\bibitem{flmm2009}
Ferragina, P., Luccio, F., Manzini, G., Muthukrishnan, S.:
\newblock Compressing and indexing labeled trees, with applications.
\newblock Journal of the ACM \textbf{57}(1) (2009)

\bibitem{sn2010}
Sadakane, K., Navarro, G.:
\newblock Fully-functional succinct trees.
\newblock In: SODA. (2010)  134--149

\bibitem{bchm2007}
Barbay, J., {Castelli Aleardi}, L., He, M., Munro, J.I.:
\newblock Succinct representation of labeled graphs.
\newblock In: ISAAC, Springer-Verlag LNCS 4835 (2007)  316--328

\bibitem{chl2004}
Chan, H.L., Hon, W.K., Lam, T.W.:
\newblock Compressed index for a dynamic collection of texts.
\newblock In: CPM. (2004)  445--456

\bibitem{mn2008}
M{\"a}kinen, V., Navarro, G.:
\newblock Dynamic entropy-compressed sequences and full-text indexes.
\newblock ACM Transactions on Algorithms \textbf{4}(3) (2008)

\bibitem{bb2004}
Blandford, D.K., Blelloch, G.E.:
\newblock Compact representations of ordered sets.
\newblock In: SODA. (2004)  11--19

\bibitem{chls2007}
Chan, H.L., Hon, W.K., Lam, T.W., Sadakane, K.:
\newblock Compressed indexes for dynamic text collections.
\newblock ACM Transactions on Algorithms \textbf{3}(2) (2007)

\bibitem{lp2009}
Lee, S., Park, K.:
\newblock Dynamic rank/select structures with applications to run-length
  encoded texts.
\newblock Theoretical Computer Science \textbf{410}(43) (2009)  4402--4413

\bibitem{ghsv2007}
Gupta, A., Hon, W.K., Shah, R., Vitter, J.S.:
\newblock A framework for dynamizing succinct data structures.
\newblock In: ICALP. (2007)  521--532

\bibitem{fs1989}
Fredman, M.L., Saks, M.E.:
\newblock The cell probe complexity of dynamic data structures.
\newblock In: STOC. (1989)  345--354

\bibitem{rrr2001}
Raman, R., Raman, V., Rao, S.S.:
\newblock Succinct dynamic data structures.
\newblock In: WADS. (2001)  426--437

\bibitem{ol1981}
Overmars, M.H., van Leeuwen, J.:
\newblock Worst-case optimal insertion and deletion methods for decomposable
  searching problems.
\newblock Inf. Process. Lett. \textbf{12}(4) (1981)  168--173

\bibitem{f1996}
Fleischer, R.:
\newblock A simple balanced search tree with o(1) worst-case update time.
\newblock Int. J. Found. Comput. Sci. \textbf{7}(2) (1996)  137--150

\bibitem{fmmn2007}
Ferragina, P., Manzini, G., M{\"a}kinen, V., Navarro, G.:
\newblock Compressed representations of sequences and full-text indexes.
\newblock ACM Transactions on Algorithms \textbf{3}(2) (2007)

\bibitem{fmmn2004}
Ferragina, P., Manzini, G., M\"akinen, V., Navarro, G.:
\newblock An alphabet-friendly {FM}-index.
\newblock In: SPIRE, Springer-Verlag LNCS 3246 (2004)  150--160

\end{thebibliography}

\newpage
\appendix
\begin{center}
  \bf \Large Appendices
\end{center}

\section{The Comparison of Our Results and Previous Results}\label{app:compare}

\begin{table}[h]
\begin{center}
\begin{tabular} {|p{2.3cm}|p{1.6cm}|p{2.3cm}|p{2.7cm}|p{2.7cm}|} \hline
                              &alphabet  &space (bits) &{\accessop}, {\rankop} and {\selop} &{\insertop} and {\deleteop}   \\ \hline
M{\"a}kinen \& Navarro~\cite{mn2008} &General &$nH_0 + o(n)\cdot\lg \sigma$ &$O(\lg n \log_q\sigma)$ ($q = o(\sqrt{\lg n})$) &$O(q\lg n \log_q\sigma)$ \\
Lee \& Park~\cite{lp2009} & &$n\lg\sigma + o(n)\cdot\lg \sigma$ &$O(\lg n (\frac{\lg\sigma}{\lg\lg n} +1))$ &$O(\lg n (\frac{\lg\sigma}{\lg\lg n} +1))$ amortized \\
Gonz\'{a}lez \& Navarro~\cite{gn2009} & &$nH_0 + o(n)\cdot\lg \sigma$ &$O(\lg n (\frac{\lg\sigma}{\lg\lg n} +1))$ &$O(\lg n (\frac{\lg\sigma}{\lg\lg n} +1))$\\
Gupta~\etal~\cite{ghsv2007} & &$n\lg\sigma + \lg\sigma(o(n)+O(1))$ &$O(\lg\lg n)$ &$O(\lg^{\epsilon} n)$ amortized ($0<\epsilon < 1$) \\
This paper & &$n H_0 + o(n)\cdot\lg\sigma$ &$O(\frac{\lg n}{\lg\lg n}(\frac{\lg \sigma}{\lg\lg n}+1))$ &$O(\frac{\lg n}{\lg\lg n}(\frac{\lg \sigma}{\lg\lg n}+1))$\\
\hline

Lee \& Park~\cite{lp2009} &$\polylog(n)$ &$n\lg\sigma + o(n)\cdot\lg \sigma$ &$O(\lg n)$ &$O(\lg n)$ amortized \\
Gonz\'{a}lez \& Navarro~\cite{gn2009} & &$nH_0 + o(n)\cdot\lg \sigma$ &$O(\lg n)$ &$O(\lg n)$\\
Gupta~\etal~\cite{ghsv2007} & &$n\lg\sigma + \lg\sigma(o(n)+O(1))$ &$O(1)$ &$O(\lg^{\epsilon} n)$ amortized ($0<\epsilon < 1$) \\
This paper & &$n H_0 + o(n)\cdot\lg\sigma$ &$O(\frac{\lg n}{\lg\lg n})$ &$O(\frac{\lg n}{\lg\lg n})$\\
\hline

Blandford \& Blelloch~\cite{bb2004}  & Binary &$O(nH_0)$    &$O(\lg n)$ &$O(\lg n)$ \\ 
Chan~\etal~\cite{chl2004} &  &$O(n)$ &$O(\lg n)$ &$O(\lg n)$ \\ 
Chan~\etal~\cite{chls2007} &  &$O(n)$ &$O(\frac{\lg n}{\lg\lg n})$ &$O(\frac{\lg n}{\lg\lg n})$\\
M{\"a}kinen \& Navarro~\cite{mn2008} &  &$nH_0 + o(n)$ &$O(\lg n)$ &$O(\lg n)$ \\
Sadakane \& Navarro~\cite{sn2010} & &$n+o(n)$ &$O(\frac{\lg n}{\lg\lg n})$ &$O(\frac{\lg n}{\lg\lg n})$\\
Gupta~\etal~\cite{ghsv2007} & &$nH_0 + o(n))$ &$O(\lg\lg n)$ &$O(\lg^{\epsilon} n)$ amortized ($0<\epsilon < 1$) \\
This paper & &$nH_0 + o(n)$ &$O(\frac{\lg n}{\lg\lg n})$ &$O(\frac{\lg n}{\lg\lg n})$\\
\hline
\end{tabular}
\end{center}
\caption{A comparison of previous results and our results on succinct representations of dynamic strings. }
\label{table:compare}
\end{table}

\section{Proof of Lemma~\ref{lem:cspsisize}}
\label{app:cspsisize}
%\begin{proof}
It takes $kd n$ bits to encode the integers in $C$. To compute the overhead of storing $C$ in blocks using our strategy, we first bound the space required to store the pointers used in the lists of blocks. 
The number of pointers is linear in the number of blocks which is  $O(\frac{kd n}{\lg^{3/2} n})$, and since it takes $\Theta(\lg n)$ to encode each pointer, all the pointers occupy $O(\frac{kd n}{\sqrt{\lg n}})$ bits in total. 
We then bound the space wasted in the blocks that are partially full. 
As only one block in each superblock is allowed to have some free space, the number of such blocks is at most the number of superblocks which is $O(\frac{kd n \lg\lg n}{\lg^2 n})$. 
Since up to $\lceil\lceil \lg n\rceil^{3/2}\rceil$ bits can be wasted for each such block, the total number of wasted bits here is at most $O(\frac{kd n \lg\lg n}{\sqrt{\lg n}})$. 
Thus to store $C$ using superblocks and blocks (without considering the B-tree), it requires $kd n + O(\frac{kd n \lg\lg n}{\sqrt{\lg n}})$ bits. 

It now suffices to compute the additional space required to store the B-tree. 
The number of nodes in the B-tree is linear in the number of superblocks which is $O(\frac{kd n \lg\lg n}{\lg^2 n})$. 
Thus the structure of the B-tree (which is an ordinal tree without considering any additional information stored with each node) can be stored in $O(\frac{kd n \lg\lg n}{\lg n})$ bits using a constant number of pointers per node. 
By Lemma~\ref{lem:smallpartial}, the $d+1$ sequences for each internal node can be maintained using $O(df\lg n)$ bits with a precomputed universal table of size $o(n)$ bits (we only need one copy of this {\em universal} table for the sequences for all the internal nodes). 
Thus the B-tree can be stored using $O(\frac{kd^2f n \lg\lg n}{\lg n})+o(n)$ bits excluding the data stored in its leaves. 
Therefore, the total space cost is $kd n (1+ O(\frac{\lg\lg n}{\sqrt{\lg n}}) + O(\frac{df \lg\lg n}{\lg n}))$ bits. 
Since we assume $d = O(\lg^{\eta} n)$ and $f = \lg^{\lambda} n$, the above space cost is  $kd n (1+ O(\frac{\lg\lg n}{\sqrt{\lg n}}) +  O(\frac{\lg\lg n}{\lg^ {1-\eta - \lambda}n}))$. 
When $0<\lambda < 1-\eta$, the space cost becomes $kd (n+o(n))$ bits. \qed

%Therefore, the total space cost in bits is $kd n (1+ O(\frac{\lg\lg n}{\sqrt{\lg n}}) + O(\frac{df \lg\lg n}{\lg n})) = kd n (1+ O(\frac{\lg\lg n}{\sqrt{\lg n}}) +  O(\frac{\lg\lg n}{\lg^ {1-\eta - \lambda}n}))$,  
%which is $kd (n+o(n))$ when $0<\lambda < 1-\eta$. \qed
%\end{proof}

\section{Details of Supporting $\searchop$ and $\updateop$ in the Proof of Lemma~\ref{lem:cspsisum}}
\label{app:cspsisearch}
%***LONG
%The operation $\searchop(C, j, x)$ can be supported in a similar manner. 
The support for operation $\searchop(C, j, x)$ is similar to that for $\sumop$. 
Initially we let $v$ be the root of the tree, and set $r$ to be $0$. 
We again start at the root $v$, and by computing $c = \searchop(R_j(v), x)$, we know that the $c$\textsuperscript{th} child of the root contains the result. 
We also increase the value of $r$ by $\sumop(P(v), c-1)$. 
We then set $v$ to be its $c$\textsuperscript{th} child, decrease the value of $x$ by $\sumop(R_j(v), c-1)$, and the process continues until we reach a leaf. 
We then process the corresponding superblock in chunks of size $\lceil \frac{1}{2}\lg n \rceil$, summing up the integers from $Q_j$ in this superblock using $A_1$, until we get a sum that is larger than or equal to the current value $x$. 
A binary search in $O(\lg\lg n)$ time in the last chunk we read, with the help of table $A_1$, will give us the result (we also need to increase the result of the binary search by the value stored in $r$ when we return it as the result). 
The entire process takes $O(\frac{\lg n}{\lg\lg n})$ time. 

To support $\updateop(C, j, i, \delta)$, we perform a top-down traversal as we do for $\sumop$ until we reach a leaf. 
During the traversal, each time we go from a node $v$ to its child (let $c$ be the rank of this child among its sibling), we update $R_j(v)$ by performing $\updateop(R_j(v), c, \delta)$. 
When we reach a leaf, we can locate the $k$ bits storing $Q_j[i]$ in $O(\frac{\sqrt{\lg n}}{\lg\lg n})$ time, as we only have to follow the pointers between the blocks of the superblock $O(\frac{\sqrt{\lg n}}{\lg\lg n})$ times. 
This will allow us to update $Q_j[i]$. 
The entire process takes $O(\frac{\lg n}{\lg\lg n})$ time.\qed

\section{Proof of Lemma~\ref{lem:cspsiinsert}}
\label{app:cspsiinsert}
The operations {\insertop} and {\deleteop} change the size of $C$ by increasing and decreasing $n$ by $1$, respectively. 
When $n$ changes, sometimes the value $\lceil\lg n\rceil$ also changes, and this affects our data structures: First, this changes the maximum and minimum sizes of superblocks and the size of blocks. 
Second, since we use a precomputed universal table $A_1$ to process $\lceil\frac{1}{2}\lg n\rceil$ bits in chunks, whose content depends on $\lceil\frac{1}{2}\lg n\rceil$, $A_1$ may change when $\lceil\lg n\rceil$ changes. 
Thus we need to handle the case in which $\lceil\lg n\rceil$ changes after an  {\insertop} or {\deleteop} operation. 

We first consider the case in which $\lceil\lg n\rceil$ does not change after we perform $\insertop$ or $\deleteop$. 
To support $\insertop(C, i)$, we start from the root and traverse down the B-tree as we do for $\updateop$ until we reach a leaf. 
During the traversal, each time we go from a node $v$ to its child (let $c$ be the rank of this child among its sibling), we update $P(v)$ by performing $\updateop(P(v), c, 1)$. 
When we reach a leaf, we insert a $0$ before the $i$\textsuperscript{th} position of all the sequences by creating a new superblock, copying the data from the old superblock contained in this leaf to this new superblock, and adding $0$'s at appropriate positions when we copy. 
We then replace the old superblock by the new superblock, and deallocate the memory for the old superblock. 
As we can copy the bits from the old superblock to the new superblock in chunks of size $\lceil\lg n\rceil$, and it takes constant time to add a $0$ into a sequence, the copying process takes $O(\frac{\lg n}{\lg\lg n} + d) = O(\frac{\lg n}{\lg\lg n})$ time. 
Since each superblock has $O(\frac{\sqrt{\lg n}}{\lg\lg n})$ blocks, the deallocation and allocation of a superblock takes $O(\frac{\sqrt{\lg n}}{\lg\lg n})$ time. 
Combined with the $O(\frac{\lg n}{\lg\lg n})$ time required to traverse down the B-tree, our algorithm takes $O(\frac{\lg n}{\lg\lg n})$ time so far. 

After the above process, there are two cases. First, the size of the new superblock does not exceed $2L$. In this case, we do nothing. 
Second, the size of the new superblock has more than $2L$ bits, which means the leaf has $m = \lfloor\frac{2L}{dk}\rfloor + 1$ integers from each sequence in $C$. 
In this case, we split the leaf into two. The left new leaf stores first $\lceil m/2\rceil$ integers from each sequence, and the right one stores the rest. 
Clearly the size of the superblock for either leaf is between $L/2$ and $2L$. 
This requires another copying process, similar to that in the previous paragraph, which takes $O(\frac{\lg n}{\lg\lg n})$ time. 
The parent of the old leaf becomes the parent, $v$, of both new leaves. 
We then reconstruct the data structures for $P(v)$ and $R_j(v)$'s. 
By Lemma~\ref{lem:smallpartial}, this requires $O(df) = O(\lg ^{\eta+\lambda} n) = o(\frac{\lg n}{\lg\lg n})$ time (recall that we have $0<\lambda<1-\eta$ by Lemma~\ref{lem:cspsisize}). 
However, the parent might overflow (i.e. have more than $2f$ children), and if we split the parent into two nodes, this might in turn cause more nodes to overflow along the path to the root. 
Thus we need to split all the $O(\frac{\lg n}{\lg\lg n})$ ancestors of the new leaves and rebuild their associated data structures in the worst case. 
It is well-known that in the amortized sense, each split of the leaf of a B-tree will only cause a constant number of internal nodes to overflow. 
This means that each {\insertop} requires $o(\frac{\lg n}{\lg\lg n})$ amortized time to rebuild the data structures for internal nodes. 
As a result, we now have an algorithm that can support {\insertop} in $O(\frac{\lg n}{\lg\lg n})$ amortized time. 

%We can further achieve $O(\frac{\lg n}{\lg\lg n})$ worst-case time using the deamortization approach proposed by Sadakane and Navarro~\cite{sn2010}. 
%More precisely, each time after we insert the $0$'s into a leaf, no matter whether this causes the leaf to split or not, we traverse up the leaf and found the first ancestor of this leaf that has more than $2f$ children. 
%We then split this internal node into two, rebuild related data structures, and stop. 
%For all the other ancestors, we only update the sequences stored in them without splitting them. 
%This way we only split at most one internal node per insert, and thus the maximum degree of an internal node becomes $4f$, which happens when all its $2f$ original children have been split while we have not got a chance to split this node. 
%Increasing the maximum degree of an internal node to $4f$ will not affect our analysis, and our support for  {\insertop} is now $O(\frac{\lg n}{\lg\lg n})$ worst-case time. 

Our support for {\deleteop} is analogous to our support for {\insertop}, which takes $O(\frac{\lg n}{\lg\lg n})$ amortized time. %, and decreases the minimum degree of an internal node to $f/2$. 

We now need only handle the case in which $\lceil\lg n\rceil$ is increased or decreased by $1$ after an  {\insertop} or {\deleteop} operation. 
First, this change will cause the value $L$, as well as the size of blocks, to change. 
With the assumption, $w = \Theta(\lg n)$ in this lemma, no matter how many times we change the value of $\lceil\lg n\rceil$, $L$ will only change by a constant factor. 
The same applies to the value we choose as block size. 
Thus, when $\lceil\lg n\rceil$ changes, if we do not change the block size, or the maximum and minimum sizes for superblocks, it is easy to verify that all our previous space and time analysis still applies. 

We still need to take care of the table $A_1$ built in Lemma~\ref{lem:cspsisum}. 
Recall that $A_1$ has $O((\sqrt{n}\lg^2 n)/k)$ bits. 
Thus if we do not update $A_1$ (i.e. if we keep use the table built for the original given collection), its size may not always be a lower order term. 
To address this problem, an easy strategy is to rebuild $A_1$ each time $\lceil \lg n\rceil$ is increased or decreased by $1$. 
This take $o(n)$ time, but as this happens only when we have performed {\insertop} or {\deleteop} at least $\Theta(n)$ times, we can charge this cost to $\Theta(n)$  {\insertop} or {\deleteop} operations. 
Hence {\insertop} and {\deleteop} can be supported in $O(\frac{\lg n}{\lg\lg n})$ amortized time.\qed
%It is actually easy to achieve deamortization for this process. 
%As the content of $A_1$ only depends on $n$, we can simply construct the new tables incrementally each time we {\insertop} or {\deleteop}. 
%The same strategy can be used for the table constructed when we use Lemma~\ref{lem:smallpartial} to encode all the $P(v)$'s and $R_j(v)$'s. 
%This allows us to achieve  $O(\frac{\lg n}{\lg\lg n})$ worst-case time support for  {\insertop} and {\deleteop}.\qed

\section{Proof of Lemma~\ref{lem:cspsi}}
\label{app:cspsi}
In this proof, we first deamortize the support for {\insertop} and {\deleteop} of Lemma~\ref{lem:cspsiinsert}, and then generalize our results to the case in which the word size of the RAM is $w=\Omega(\lg n)$. 

To deamortize the algorithm for {\insertop} and {\deleteop}, we first deamortize the process of rebuilding table $A_1$. 
As the content of $A_1$ only depends on $n$, we can simply construct the new tables incrementally each time we {\insertop} or {\deleteop}. 
The same strategy can be used for the table constructed when we use Lemma~\ref{lem:smallpartial} to encode all the $P(v)$'s and $R_j(v)$'s. 

The challenging part of this proof is to re-balance the B-tree (i.e. to merge and split its leaves and internal nodes) after insertion and deletion. 
For this we use the {\em global rebuilding} approach of Overmars and van Leeuwen~\cite{ol1981}. 
By their approach, if there exist two constant numbers $c_1 > 0$ and $0<c_2<1$ such that after performing $c_1 n$ insertions and/or $c_2 n$ deletions without re-balancing the B-tree, we can still perform query operations in $O(\frac{\lg n}{\lg\lg n})$ time, and if the B-tree can be rebuilt in $O(f(n)\times n)$ time, we can support insertion or deletion in $O(\frac{\lg n}{\lg\lg n}+f(n))$ worse-case time using additional space proportional to the size of our original data structures and a buffer of size $O(n\lg n)$ bits. 
We first note that if we do not re-balance the B-tree after performing {\deleteop} $c_2n$ times for any $0 < c_2 < 1$, the time required to answer a query will not change asymptotically. 
This is however different for {\insertop}: one bad case is to perform insertion $\Theta(n)$ times in the same leaf. 

To address the problem related to {\insertop}, we use the approach of Fleischer~\cite{f1996} as in \cite{sn2010}\footnote{The details of using Fleischer's approach in \cite{sn2010} were omitted in the original paper and were given in private communication with Gonzalo Navarro. They also used the technique of global rebuilding in \cite{sn2010}, but we are not convinced of the correctness of their strategy, so we use global rebuilding differently here.}. 
Fleischer originally showed how to maintain a $(a, 2b)$ tree where $b \ge 2a$ such that insertions and deletions can be performed in $O(1)$ worst-case time after the positions in the leaves where such update operations occur are located. 
Essentially, in his approach, at most one internal node is split after each insertion, which guarantees that the maximum degree of internal nodes will not exceed $2b$. 
This is done by maintaining a pointer called {\em $r$-pointer} for each leaf which initially points to the parent of this leaf. 
If we insert the new key into leaf $B$, and if the $r$-pointer, $r_B$, of $B$ points to node $v$, we check if $v$ has more than $b$ children. 
If it has, we split it into two smaller nodes. 
If now the parent, $w$, of $v$ has more than $b$ children, we mark the edges to its two new children as connected pair, and this information will be used when we split $w$ in the future. 
If we find that $v$ has less than $b$ children when we check its size, we either split the leaf $B$ if $v$ is the root of the tree, or we make $r_B$ point to the parent of $v$. 
To apply the above approach to our data structures, note that in the above process, each time a key value of $\lceil\lg n\rceil$ bits is inserted into a leaf, while in our problem, each time a character which occupies $\lceil\lg \sigma\rceil$ bits is inserted. 
Thus if we move the pointer $r_B$ of any leaf $B$ after $\lceil\lg n\rceil/\lceil\lg \sigma\rceil$ characters are inserted into it (assume that $\lceil\lg n\rceil$ is divisible by $\lceil\lg \sigma\rceil$ for simplicity), we can use Fleischer's approach here. 
%This approach can be easily adapted to our structure. 
%The main change is that each time we insert a character which occupies $\lg \sigma$ bits
%we do not always move the pointer $r_B$ up the tree when we find that $v$ has less than $b$ children after each insertion. 
%Instead, we do this only every certain fixed number of insertions to guarantee that the leaf size is large enough for splitting (i.e. its size exceeds $2L$) when $r_B$ points to the root of the tree. 
The information about connected edges can be stored using a bitmap of size $4f$ for each internal node. 
Using this approach, the maximum degree of internal nodes is $4f$, and our previous analysis still applies. 
This way after $O(n)$ insertions, query operations can still be performed in $O(\frac{\lg n}{\lg\lg n})$ time. 
Finally, we note that it takes $O(nd)$ time to construct the B-tree, so we can support {\insertop} and {\deleteop} in $O(d+\frac{\lg n}{\lg\lg n}) = O(\frac{\lg n}{\lg\lg n})$ time. 

To apply the global rebuilding approach to our data structure, when the number of {\insertop} and {\deleteop} operations performed exceeds half the initial length of the sequences stored in the data structure, we build a new data structure incrementally. In this data structure, the value of $\lceil\lg n\rceil$ is determined by the value, $n_0$, of $n$ when we start the rebuilding process. 
After we finish rebuilding, the value of $n$ can only be changed by a constant factor, thus we can still use $\lceil\lg n_0\rceil$ as the value of $\lceil\lg n\rceil$ without affecting time or space bounds. 
Using this we can handle the change of the value of $\lceil\lg n\rceil$ smoothly, since the difference between $\lceil\lg n\rceil$ and $\lceil\lg n_0\rceil$ is at most $1$ before we start rebuilding again.

%To apply the global rebuilding approach to our data structure, when the number of {\insertop} and {\deleteop} operations performed exceeds half the initial length of the sequences stored in the current data structure, we build a new data structure incrementally in the next $n_0/3$ steps, where $n_0$ is the length of the sequences when we start the rebuilding process. 
%To build this new structure, we first build a structure that encodes the sequences are stored in the old structure when we start the rebuilding process. 
%The {\insertop} and {\deleteop} operations performed after the rebuilding process has started is stored in a buffer, and they will be performed on the new tree when the rebuilding process is completed. 

%This new structure stores the data that are present when the 
%The {\insertop} and {\deleteop} operations performed after the rebuilding process has started is stored in a buffer, and they will be performed on the new tree when the rebuilding process is completed. 
%If we perform several updates saved in the buffer each time we insert or delete, eventually the new structure and the old structure will store the same data. We can then discard the old tree. 
%In the new data structure, the value of $\lceil\lg n\rceil$ is set to $\lceil\lg n_0\rceil$. 
%Using this we can handle the change of the value of $\lceil\lg n\rceil$ smoothly. 

Extending our result to the more general case in which $w = \Omega(\lg n)$ is more difficult than the static case. 
For a static data structure, since its size does not change, we can store it continuously in the memory, and thus the pointer size only depends on the size of the data structure, which is independent of $w$. 
Therefore, space bounds of static data structures designed for the case in which $w=\Theta(\lg n)$ usually remain the same when $w = \Omega(\lg n)$. 
However, it is not the same for a typical dynamic data structure. 
Since the size of dynamic structures changes, their components are stored in different memory blocks, and thus pointers of size $w$ are needed to record addresses in memory. 
When $n$ is small enough that $w = \omega(\lg n)$, these pointers may occupy too much space.

To reduce the space overhead when $w = \omega(\lg n)$, we first store each internal node of the B-tree using the same number of bits. 
This will increase the space cost of the data structures for internal nodes by a constant factor. 
For each value of $\lceil\lg n\rceil$, we can compute the maximum number of bits required to store all these internal nodes when the number of internal nodes in the B-tree is maximized, and allocate a consecutive memory area that is just big enough for them. 
We again waste a constant fraction of space in this memory area. 
We divide this area into segments of the same size, and each segment is just big enough to store an internal node. 
To record the starting address of this area, we need $w$ bits, but to locate any segment inside this area, pointers of size $O(\lg n)$ bits are enough. 
To encode the B-tree, we also need to encode the pointers between parents and children, and $O(\lg n)$ bits are enough for each such pointer (the pointers that point to leaves can also be encoded in  $O(\lg n)$ bits, and we will show how to achieve this later). 
We also store these pointers in internal nodes, so that each internal node has a pointer for its parent and $O(\sqrt{\lg n})$ pointers for its children. 
To handle the insertion and deletion of internal nodes, we chain the empty segments by wasting one pointer in each segment. 
The leaves are maintained in the same manner, and they can be referred to using $O(\lg n)$ bits. 
This increases the total space cost to $O(kdn + w)$ bits. \qed

%To reduce the space overhead when $w = \omega(\lg n)$, we first store each internal node of the B-tree using the same number of bits, and this will increase the space by a constant factor. 
%For each B-tree to be constructed, we allocate a memory block whose size is sufficient for the new structure until another structure has to be built, and this only increases the space cost of our data structures by a constant factor. 
%Thus we can still use pointers of size $O(\lg n)$, and a constant number of machine words of $O(w)$ bits are required to record the starting position of each memory block allocated.\qed

\section{Proof of Lemma~\ref{lem:stringsize}}
\label{app:stringsize}

The string occupies $n\lg\sigma$ bits. 
The space required for the pointers that link the blocks is $O(\frac{n\lg\sigma}{\sqrt{\lg n}})$ bits, and the space wasted in the partially full blocks is $O(\frac{n\lg\sigma\lg\lg n}{\sqrt{\lg n}})$ bits. 
The B-tree has $O(\frac{n\lg\sigma\lg\lg n}{\lg ^2 n})$ nodes, and each internal nodes encodes data of $O(\lg^{3/2} n)$ bits (including $\Theta(\sqrt{\lg n})$ pointers to its children and parent), so the internal nodes of the B-tree require  $O(\frac{n\lg\sigma\lg\lg n}{\sqrt{\lg n}})$ bits. 
Finally, the CSPSI structure $E$ occupies $O(\frac{n\lg\sigma}{L}\times \sigma \times \lg n) = O(\frac{n\lg\sigma\lg\lg n}{\sqrt{\lg n}})$ bits, and its buffer requires $O(\frac{n\lg\sigma}{L}\times \lg(\frac{n\lg\sigma}{L})) = O(\frac{n\lg\sigma\lg\lg n}{\lg n})$ bits. 
Therefore, all the data structures occupy $n\lg \sigma + O(\frac{n\lg\sigma\lg\lg n}{\sqrt{\lg n}})$ bits, including the precomputed tables for the $U(v)$'s and $I(v)$'s.\qed

\section{Support for $\selop$ in the Proof of Lemma~\ref{lem:stringquery}}
\label{app:stringselect}

Our algorithm for $\selop_{\alpha}(S, i)$ first computes the number of the superblock containing the $i$\textsuperscript{th} occurrence of $\alpha$ in $S$, which is $j = \searchop(E, \alpha, i)$. 
We also know that the $i$\textsuperscript{th} occurrence of $\alpha$ in $S$ is the $(i-\sumop(E, \alpha, i-1))$\textsuperscript{th} occurrence of $\alpha$ in superblock $j$. 
We then locate superblock $j$ by traversing down the B-tree, using $\sumop$ and $\searchop$ operations on the $U(v)$'s. 
Once we find superblock $j$, we read its content in chunks of size $\lceil\frac{1}{2}\lg n \rceil$. 
With table $A_2$, we can find the chunk containing the occurrence we are looking for in $O(\frac{\lg n}{\lg\lg n})$ time. 
A binary search within this chunk in $O(\lg\lg n)$ time using $A_2$ will find this occurrence. 
Thus $\selop$ can be supported in $O(\frac{\lg n}{\lg\lg n})$ time. \qed

\section{Proof of Lemma~\ref{lem:stringtheta}}
\label{app:stringdelete}

Operations $\insertop$ and $\deleteop$ can possibly change the value of $\lceil \lg n \rceil$. 
We first consider the case in which the value $\lceil \lg n \rceil$ remains unchanged after an $\insertop$ or $\deleteop$ operation. 

To support $\insertop_{\alpha}(S, i)$, we first locate the leaf containing $S[i-1]$ using the same process for $\accessop$. 
Let $j$ be the number of the superblock contained in this leaf, and assume that $S[i-1]$ is the $z$\textsuperscript{th} character stored in this leaf. 
We then insert $\alpha$ after this character, and shift the remaining characters, starting from the character that is currently the $(z+1)$\textsuperscript{th} character in superblock $j$, in chunks of size $\lceil\frac{1}{2}\lg n \rceil$ bits. 
If the last block of the superblock does not have enough free space for one character before insertion, but the superblock has no more than $2L-\lceil\lceil \lg n\rceil^{3/2}\rceil$ bits, we create another block and append it to the list of blocks for this superblock so that the insertion can be performed. 
After this, we update $E$ by calling $\updateop(E, \alpha, j, 1)$, and we also visit the ancestors of this leaf, updating their $I(v)$ sequences by incrementing a certain number in the sequence by $1$. 
We then terminate the process, which takes $O(\frac{\lg n}{\lg\lg n})$ time. 

If superblock $j$ is so full that the insertion of a single character cannot be done, we check superblock $j-1$. 
If superblock $j-1$ is not full, we remove the first character from superblock $j$, and insert it after the last character in superblock $j-1$. 
We then shift the second, third, until the $(z-1)$\textsuperscript{th} character to the left by one position, again in chunks of size $\lceil\frac{1}{2}\lg n \rceil$. 
We then insert $\alpha$ at the $z$\textsuperscript{th} position inside superblock $j$. 
Then we update the $I(v)$ sequences of the ancestors of the leaf containing superblock $j-1$, perform three updates to $E$, and then terminate. 
This process takes $O(\frac{\lg n}{\lg\lg n})$ time. 

Finally we need to consider the case in which superblock $j-1$ does not exist, or it is full. 
In this case, we split superblock $j$ into two new superblocks, and the left one has only  one character (i.e. the first character stored in superblock $j$). 
Then the second new superblock is not full, so that we can insert $\alpha$ after the $(z-1)$\textsuperscript{th} character in it. 
Both new leaves will be the children of the parent, $v$, of the original superblock $j$, and we need to reconstruct the data structures $U(v)$ and $I(v)$ in $O(\sqrt{\lg n})$ time. 
We then modify the $U(u)$ and $I(u)$ sequences of each ancestor, $u$, of $v$, using the $\updateop$ operator (we pay $O(1)$ time for every node $u$ here as we only need to increase one integer in $U(u)$ by $1$ and another integer in $I(u)$ by $1$). 
Finally we also need to update $E$. Note that the new superblocks created are numbered $j$ and $j+ 1$. 
For the new superblock $j$, we first call $\insertop(E, j)$ and then call $\updateop(E, \beta, j, 1)$, where $\beta$ is the only character in this superblock. 
For superblock $j+1$, performing $\updateop(E, \beta, j+1, -1)$ and $\updateop(E, \alpha, j+1, 1)$ will reflect the removal of one character $\beta$ from this superblock and the insertion of one character $\alpha$ into this block. 

So far the only problem we have not considered is the possibility that splitting a leaf may incur a series of splits of the nodes along the path from this leaf to the root. 
Since the number of splits incurred after one insertion is $O(1)$ in the amortized sense, we can support {\insertop} in $O(\frac{\lg n}{\lg\lg n})$ amortized time. 
%This can be taken care of by modifying the strategy described in the proof of Lemma~\ref{lem:cspsiinsert}: 
%We observe that after we insert a character into superblock $j$, the size of this superblock may not necessarily be increased. 
%It is possible that the size of superblock $j-1$ is increased instead. 
%Thus, the decision on whether we should move the pointer, $r_B$, associated with a leaf $B$ up the tree is based on how much the size of $B$ has been increased since the last time we moved its pointer, not how many times we have inserted characters into it. 
%Another difference is that when a superblock is split into two new superblocks, the left one has only one character. 
%As with the approach in the proof of Lemma~\ref{lem:cspsiinsert}, the strategy here also increases the maximum degree of an internal node to $4b$. 
%The above process takes $O(\frac{\lg n}{\lg\lg n})$ time. 
%Now we have considered all the possible cases of $\insertop$, so we claim that we can support it in  $O(\frac{\lg n}{\lg\lg n})$ time. 

To support $\deleteop(S, i)$, 
we first locate the leaf, j, containing $S[i]$ in $O(\frac{\lg n}{\lg\lg n})$ time. 
This process also tells us the position of $S[i]$ in superblock $j$ (assume that $S[i]$ is the $q$\textsuperscript{th} character in superblock $j$). 
This allows us to retrieve the character, $\alpha$, stored in $S[i]$. 
We remove this character from its superblock by shifting, and if this makes the last block of superblock $j$ empty, we simply remove it. 
After this, there are three cases. In the first case, superblock $j$ becomes empty. 
In this case, we remove the leaf containing superblock $j$, and rebuild the data structures stored in the parent of the leaf containing superblock $j$. 
For any other ancestor, $v$, of this leaf, we update the data structures $I(v)$ and $U(v)$ in constant time to reflect the fact that there is one less character $\alpha$ and one less superblock stored in the subtree rooted at $v$. 
We also perform $\deleteop(E, j)$ in $O(\frac{\lg n}{\lg\lg n})$ time. 

In the second case, superblock $j$ is not empty, and the fact that it has one less character does not violate the restriction that there are no two adjacent skinny superblocks. 
This can happen if after removing $S[j]$, superblock $j$ is not skinny, or it is skinny, but neither of the two superblocks adjacent to it is skinny. 
In this case, only an $\updateop$ operation need be performed on $I(v)$ for each ancestor, $v$, of the leaf containing superblock $j$. 
We also perform $\updateop(E, \alpha, j, -1)$  in $O(\frac{\lg n}{\lg\lg n})$ time. 

In the third case, superblock $j$ is not empty, but both superblock $j$ and at least one of its adjacent superblocks are skinny. 
Without loss of generality, assume that superblock $j-1$ is skinny. 
In this case, we locate superblock $j-1$ by performing a top-down traversal, remove its last character, and insert it to the first position in superblock $j$, so that superblock $j$ is no longer skinny. 
If superblock $j-1$ becomes empty, we remove it. 
Updates to $E$ and the ancestors of superblocks $j-1$ and $j$ are performed in a similar manner as in the first two cases. 

It is clear that our algorithm for each case takes $O(\frac{\lg n}{\lg\lg n})$ time. 
We also need to consider the possibility of causing internal nodes to underflow after we remove an empty superblock. 
Since each deletion causes $O(1)$ amortized number of internal nodes to underflow, $\deleteop$ can be supported in $O(\frac{\lg n}{\lg\lg n})$ amortized time.
%This can be handled using a strategy similar to our strategy for the problem of causing internal nodes to overflow when we support $\insertop$. 
%We again use the global rebuilding approach, but we do it differently so that the additional space required will be $o(n)\cdot\lg\sigma$ bits. 
%We do not merge internal nodes, but we still perform the operations that guarantee that there are no two adjacent skinny superblocks, including the splitting of leaves. 
%This will not affect the strategy we used to deamortize the process of insertion, since for each deletion, at most one character may be moved between leaves, so that even if this causes one leaf to grow in size, we move at most one pointer associated with leaves up the tree. 
%When we need to rebuild the tree, we only rebuild its internal nodes and their edges, so that only $o(n)\cdot\lg\sigma$ additional bits are used. 
%We do not use the new value of $\lceil\lg n\rceil$ for rebuilding, but this is not a problem since $w = \Theta(\lg n)$. 
%The above process guarantees that time required to support $\deleteop$ remains $O(\frac{\lg n}{\lg\lg n})$.

We complete this proof by pointing out that the same approach used in the proof of Lemma~\ref{lem:cspsiinsert} to handle the changes of $\lceil\lg n\rceil$ can also be used here.\qed

\section{Deamortization of the Support for {\insertop} and {\deleteop} Operations over Strings}
\label{app:succinctrebuild}
%We first modify the support for {\insertop} and {\deleteop} operations, and then design an approach called {\em Succinct Global Rebuilding} to deamortize the support for {\insertop} and {\deleteop}. 

We first modify the support for {\insertop} so that it takes $O(\frac{\lg n}{\lg\lg n})$ time, and that for any constant $c_1>1$, after performing $c_1 n$ insertions, we can still perform query operations in $O(\frac{\lg n}{\lg\lg n})$ time.

\begin{lemma}
The data structures in Section~\ref{sec:stringds} can support {\insertop} in $O(\frac{\lg n}{\lg\lg n})$ time such that for any constant $c_1>1$, after performing {\insertop} $c_1 n$ times, the operations {\accessop}, {\rankop} and {\selop} can still be supported in $O(\frac{\lg n}{\lg\lg n})$ time. 
\end{lemma}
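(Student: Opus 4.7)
The plan is to replace the worst-case cascading splits hidden in the amortized analysis of Lemma~\ref{lem:stringtheta} with a lazy rebalancing scheme, so that each {\insertop} does only $O(1)$ structural work per level it actually touches, while still keeping the B-tree in a state where top-down traversals cost $O(\frac{\lg n}{\lg\lg n})$. The natural tool is Fleischer's $r$-pointer technique~\cite{f1996}, already invoked in Lemma~\ref{lem:cspsi} for the CSPSI structure; the task here is to verify that it can be grafted onto the string B-tree of Section~\ref{sec:stringds}, where inserting a character of $\lceil\lg\sigma\rceil$ bits plays the role of inserting one key.

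I would begin by keeping the local part of the {\insertop} algorithm from Lemma~\ref{lem:stringtheta} unchanged: descend to the leaf containing $S[i-1]$ while incrementing one entry of each $I(v)$ on the path; insert $\alpha$ into the superblock by shifting in chunks of $\lceil\tfrac12\lg n\rceil$ bits; and, when the superblock size reaches $2L$, split it into two superblocks of size roughly $L$, spawning at most one new leaf and performing the corresponding $\insertop$ and $\updateop$ calls on the CSPSI structure $E$. All of this costs $O(\frac{\lg n}{\lg\lg n})$ in the worst case by Lemma~\ref{lem:cspsi} and by the chunked-copy argument of Lemma~\ref{lem:cspsisum}.

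Next I would attach to every leaf an $r$-pointer initialised to its parent. After every $\lceil\lceil\lg n\rceil/\lceil\lg\sigma\rceil\rceil$ character insertions into the same leaf (matching the ratio between a full key and a character, as in Lemma~\ref{lem:cspsi}), we inspect the node $v$ the $r$-pointer currently references: if $v$ has more than $b$ children we split $v$, update $U(\cdot)$ and $I(\cdot)$ at $v$, $v$'s sibling, and their parent (each in $O(1)$ time via Lemma~\ref{lem:smallpartial}), and mark the two edges to the new children as a connected pair in a $4b$-bit bitmap stored at the parent; otherwise we simply move the $r$-pointer one level up. Exactly as in Fleischer's analysis, at most one split per level happens in any single {\insertop}, and the degree of every internal node is kept in the interval $[b,4b]$. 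Since $4b = O(\sqrt{\lg n}) = O(\lg^{\epsilon}n)$ the sequences $U(v)$ and $I(v)$ still fit under Lemma~\ref{lem:smallpartial}, and the height of the tree remains $O(\frac{\lg n}{\lg\lg n})$.

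Finally I would verify that the invariants used by {\accessop}, {\rankop}, and {\selop} survive $c_1 n$ consecutive insertions with no global rebuild. The superblock size invariant (at most $2L$ bits) is preserved by the local split rule itself, so the intra-leaf table lookups of Lemmas~\ref{lem:cspsisum} and~\ref{lem:stringquery} still work. The B-tree has at most $(1+c_1)n$ characters and height bounded by $\log_b((1+c_1)n)=O(\frac{\lg n}{\lg\lg n})$, so descents remain fast. The CSPSI structure $E$ grows by one column per superblock split, so its length is $O(\frac{(1+c_1)n\lg\sigma}{L})$, still small enough that Lemma~\ref{lem:cspsi} delivers $O(\frac{\lg n}{\lg\lg n})$-time queries on it; updates to its precomputed table $A_1$ and to the table $A_2$ can be rebuilt incrementally by doing $o(1)$ extra work per {\insertop}, as in the proofs of Lemmas~\ref{lem:cspsiinsert} and~\ref{lem:cspsi}. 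The main obstacle I anticipate is the bookkeeping for superblock splits interacting with Fleischer's lazy scheme at the leaf level: because a superblock split immediately creates a new leaf (rather than waiting for the $r$-pointer to fire), I will need to show that this does not break the $[b,4b]$ degree invariant at the parent; I expect to handle this by running the $r$-pointer check on the parent right after the split, which amortises to a single constant-time rebalancing step and is absorbed into the worst-case $O(\frac{\lg n}{\lg\lg n})$ budget.
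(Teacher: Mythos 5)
Your high-level plan---graft Fleischer's $r$-pointer scheme onto the string B-tree, moving the pointer once per $\lceil\lg n\rceil/\lceil\lg\sigma\rceil$ character insertions---is the right direction, and the reasoning about internal-node degrees and tree height is sound. However, there are two genuine gaps, one of which you flag yourself without actually resolving it, and one which you do not notice at all.

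First, you propose an \emph{eager} leaf split: ``when the superblock size reaches $2L$, split it into two superblocks of size roughly $L$.'' Then you superimpose Fleischer's lazy scheme on top. These two mechanisms are not synchronized: a superblock can hit $2L$ bits before its $r$-pointer has reached the root (for example, a freshly split leaf starting at $\sim L$ bits can absorb $L$ more bits in fewer than a full $r$-pointer ascent). Your remedy---``running the $r$-pointer check on the parent right after the split''---does not restore Fleischer's invariants, because the new leaf now needs its own $r$-pointer ascent from scratch, and a parent that receives a stream of such premature leaves can exceed degree $4b$ before the level-by-level splitting keeps up. The paper resolves this with a ``full''/``overflow'' leaf pairing: a leaf is split \emph{only} when its $r$-pointer reaches the root, and before that point, insertions that would overflow a full leaf are redirected to its overflow leaf. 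The $r$-pointers of a full leaf and its overflow leaf are moved in lockstep, which guarantees (by induction on a short set of invariants) that no superblock exceeds $2L$ bits, every full leaf is $\Theta(L)$ bits, the tree stays a $(b,4b)$-tree, and no two consecutive leaves are non-full. That pairing is the missing idea, and it is not a detail one can hand-wave past.

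Second, and this you do not notice: the paper splits a leaf into one leaf holding a \emph{single character} plus a second leaf holding the remainder, rather than two halves of size $\sim L$. This is forced by the CSPSI structure $E$. With a balanced split, updating $E$ requires one $\updateop$ per character of the alphabet to redistribute the per-superblock counts across two columns, i.e.\ $\Theta(\sigma)=\Theta(\sqrt{\lg n})$ calls each costing $O(\frac{\lg n}{\lg\lg n})$, which blows the worst-case budget to $\Theta(\frac{\lg^{3/2}n}{\lg\lg n})$. With the one-character split, the new leaf's column of $E$ has exactly one nonzero entry, so the whole repair is a single $\insertop(E,\cdot)$ and a constant number of $\updateop$ calls, keeping the cost at $O(\frac{\lg n}{\lg\lg n})$. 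Your claim that ``all of this costs $O(\frac{\lg n}{\lg\lg n})$ by Lemma~\ref{lem:cspsi}'' tacitly assumes $O(1)$ CSPSI updates per split, which a balanced split does not give you.
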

\begin{proof}
As in the proof of Lemma~\ref{lem:cspsi}, we also modify the approach of Fleischer~\cite{f1996} here. 
It is however more challenging to apply Fleischer's approach here because we have to split the leaves in a specific way: A leaf has to be split into one leaf that contains one character only, and a second leaf that contains the rest. 
%We have to split leaves this way due to the way update operations are supported in the CSPSI structure $E$. 
This allows us to update the CSPSI structure $E$ correctly. 
Note that this only applies to leaves, not internal nodes. 

To perform $\insertop_{\alpha}(S, i)$, after we locate the superblock, $j$, containing $S[i-1]$, recall that there are three cases. 
In the first case, superblock $j$ is not full, so we insert the character $\alpha$ into it. 
In the second case, superblock $j$ is full, but superblock $j-1$ is not. We remove the first character from superblock $j$ so that we can insert $\alpha$ into it, and then we insert the removed character into superblock $j-1$. 
In this case, superblock $j-1$ is the only superblock that has one more character after insertion. %, and it suffices to consider the insertion into superblock $j-1$ to guarantee that update operations can be performed in $O(\frac{\lg n}{\lg\lg n})$ time. 
In the third case, superblock $j$ is full, but superblock $j-1$ is full or does not exist. We insert character $\alpha$ into superblock $j$, and then split it (in the proof of Lemma~\ref{lem:stringtheta}, we state that we create a new superblock before performing the insertion, which is equivalent to the process described here). 
We can unify the above three cases by letting $B$ be the leaf containing the superblock whose size increases before we consider whether we should split it. 
More specifically, $B$ contains superblock $j$ in the first and the third cases, and it contains superblock $j-1$ in the second. 
Thus, in our algorithm (to be presented), we need only describe the operations performed when we insert a character into $B$. 

To describe our algorithm for {\insertop}, we need some definitions. 
In our algorithm, we call a leaf {\em full} only when we mark it as full, and once we mark a leaf full, it is always considered as a full leaf. 
Note that in the three cases listed above, we need to check whether a certain superblock is full, and we say that a superblock is full when the leaf containing it is marked as full. 
If a leaf $B_1$ is full, but the leaf, $B_2$, immediately to its left is not, then $B_2$ is called an {\em overflow leaf} of $B_1$. 
Initially, the $i$\textsuperscript{th} leftmost leaf contains one character if $i$ is an odd number. 
If $i$ is an even number, we store $m = \lfloor 2L/\lg \sigma \rfloor$ characters in the $i$\textsuperscript{th} leftmost leaf, mark it as a full leaf, and we make the leaf immediately to the left to be its overflow leaf. 
For simplicity, we assume that $n$ is divisible by $m+1$. 
%store as many characters in the $i$\textsuperscript{th} leftmost leaf as possible without making its size to exceed $2L$ bits, mark it as a full leaf, and we make the leaf immeditely to the left to be its overflow leaf. 
%$\lfloor 2L/\lceil\lg \sigma\rceil \rfloor$ 
We define an internal node to be {\em big} when it has $2b$ or more children. 
 
The following is our algorithm (recall that we need only describe the operations performed when we insert a character into $B$, and we also omit the details of updating $E$ and the structures stored in internal nodes since the proof of Lemma~\ref{lem:stringtheta} already shows how to update them): 
\begin{enumerate}
\item Insert the new character into leaf $B$. Let $v$ be the node that $r_B$ points to, and if $B$ is an overflow leaf, let $B'$ be the leaf immediately to the right of $B$, i.e. $B$ is the overflow leaf of $B'$. 

\item If $v$ is big, then split $v$ into two smaller nodes $v'$ and $v''$, and mark all the child edges of $v'$ and $v''$ as unpaired (a child edge of a node is an edge between this node and one of its children). 

If the parent, $w$, of $v$ is also big, mark its edges to $v'$ and $v''$ as paired. 

If leaf $B$ is an overflow leaf, let $u$ be the node that $r_{B'}$ points to. If $u \ne v$, we perform the same operations on $u$ as above if it is big, which includes the operations on its (big) parent. 

\item If $v$ is the root of the tree, split $B$ into two leaves such that the left leaf, $B_l$, contains one character, while the right leaf, $B_r$ contains the rest. 
The $r$-pointers of both $B_l$ and $B_r$ point to their common father. 
Mark $B_r$ as a full leaf, and make $B_l$ be its overflow leaf. 
If $B$ is an overflow leaf of $B'$ before, now $B'$ no long has an overflow leaf. 

If $v$ is not the root of the tree, and we have inserted $\lceil\lg n\rceil/\lceil\lg\sigma\rceil$ characters into $B$ since the last time we update $r_B$, set $r_B$ to be the father of $v$, and if $B$ is an overflow leaf, also move $r_{B'}$ one level up the tree. 
\end{enumerate}

The way we split internal nodes in Step 2 is the same as that in \cite{f1996}, and the information about paired edges is used to decide how to split the internal node. 
The main idea of the above algorithm is to move the $r$-pointers of a full leaf, $B'$, and its overflow leaf, $B$, simultaneously, so that when the overflow leaf $B$ becomes full and has to be split, $r_{B'}$ already points to the root of the tree. The next insertion into $B'$ will split $B'$ since $r_{B'}$ points to the root. 

We now prove that, after $c_1 n$ insertions, the following conditions hold:
\begin{enumerate}
\item The B-tree is a $(b, 4b)$-tree. 
\item The height of the B-tree is $\Theta(\frac{\lg n}{\lg\lg n})$. The maximum height is $\frac{2\lg n}{\lg\lg n}$. 
\item Each leaf contains at most $2L$ bits. 
\item Each full leaf has $\Theta(L)$ bits. 
\item There do not exist two consecutive leaves that are not full. 
\end{enumerate}

To prove Condition~1, note that Fleischer~\cite{f1996} proved the correctness of his approach by showing a set of invariants holds after each step of his algorithm. 
Since we modify his algorithm by moving the $r$-pointers of a full leaf and its overflow leaf simultaneously, we essentially repeat some of the steps in his algorithm up twice. 
Thus a strict proof can be given by walking through his proof and making trivial changes. 
Condition~2 follows directly from Condition~1. 
Conditions~3 and 4 are true because a non-full leaf becomes full only when its $r$-pointer reaches the root, and each time we move up its $r$-pointer, roughly $\lg n$ bits have been inserted into the leaf. 
Finally, Condition 5 can be proved by induction. It is true initially before we perform the sequence of {\insertop} operations, and we always create a new leaf that is not full between two full leaves.

Conditions 1-3 guarantee that query operations can be supported in $O(\frac{\lg n}{\lg\lg n})$ time, while Conditions 4 and 5 guarantee that the space bound remains the same. Thus this lemma follows. 
\qed\end{proof}

We now consider the support for {\deleteop}. 
To support {\deleteop}, our approach here is to simply locate the leaf containing the character to be deleted and remove this character. 
%Instead of perfomring the re-balancing operations in the proof of Lemma~\ref{lem:stringtheta}, we simply locate the leaf containing the chracter to be deleted and remove this character. 
After performing {\deleteop} $c_2 n$ times this way for any $0 < c_2 < 1$, the time required to answer a query will not change asymptotically. 
When we perform a mixed sequence of insertions and deletions, we perform each insertion as if there were no deletions performed in between. 
There is one technical detail here: When a leaf is made empty by deletions, we cannot remove it. 
This is because this leaf could be an overflow leaf, whose removal could affect future {\insertop} operations performed on the full leaf immediately to its right. 
Thus a sequence of deletions may leave a number of empty leaves. 
For each empty leaf $B_e$, we keep them conceptually in the data structure, but we deallocate the space used to store them, and store its $r$-pointer with its parent. This will not increase the space for internal nodes asymptotically. 
The corresponding child edge of its parent is set to be a NULL pointer, and when we count the number of superblocks stored in the subtree rooted at this parent, this empty leaf still counts, and its corresponding entry in $E$ is not removed. 
We store $r_{B_e}$ with the parent of $B_e$ because if $B_e$ is an overflow leaf, and in the future, one character is to be inserted into the full leaf to its right, we can still create a new superblock for $B_e$ so that this insertion can still be performed.

%We observe that after we insert a character into superblock $j$, the size of this superblock may not necessarily be increased. 
%It is possible that the size of superblock $j-1$ or $j+1$ is increased instead. 
%Thus, the decision on whether we should move the pointer, $r_B$, associated with a leaf $B$ up the tree is based on how much the size of $B$ has been increased since the last time we moved its pointer, not how many times we have inserted characters into it. 
%The exact amount of space increase that is required before we move $r_B$ up is determined by the difference between the initial size of $B$ and its maximum size. 
%Another difference is that according to the algorithm for {\insertop} stated in the proof of Lemma~\ref{lem:stringtheta}, when a superblock is split into two new superblocks, the left one has only one character and the right one is full. 
%This is okay because by our algorithm, the size of the right one will never be increased so that no further splits are required. 
%With the above changes, we can guarantee that the maximum degree of an internal node is $4b$, so that query operations can be perfomred in $O(\frac{\lg n}{\lg\lg n})$ time after $c_1 n$ insertions.

So far we have described our algorithms for {\insertop} and {\deleteop}, and showed that after $c_1 n$ insertions and/or $c_2 n$ deletions, a query operation can still be supported in $O(\frac{\lg n}{\lg\lg n})$ time. 
We can now describe a modified version of global rebuilding that will deamortize the support for {\insertop} and {\deleteop} completely, and different from the original global rebuilding approach of Overmars and van Leeuwen~\cite{ol1981}, our approach neither increases the space bounds asymptotically, nor requires any buffer. 
We thus call our approach {\em succinct global rebuilding}. 

As with the original global rebuilding approach, we start the rebuilding process when the number of {\insertop} and {\deleteop} operations performed exceeds half the initial length of the string stored in the data structure. 
Let $n_0$ denote the length of the string when we start rebuilding. 
In the new data structure, the value of $\lceil\lg n\rceil$ is determined by the value of $n_0$. 
We will build the new structure in the next $n_0/3$ {\insertop} and {\deleteop} operations. 

The main difference between our approach and the original approach in \cite{ol1981} is that during the process of rebuilding, we never store two copies of the same data, i.e. the string $S$. 
Instead, our new structure stores a prefix, $S_p$, of $S$, and the old data structure stores a suffix, $S_s$, of $S$. 
During the rebuilding process, each time we perform an insertion or deletion, we perform such an operation on either $S_p$ or $S_s$. 
After that, we remove the first $3$ characters from $S_s$, and append them to $S_p$. 
This would finish building the new data structure in $n_0 / 3$ update operations, if all the updates were performed on $S_p$. 
This is however not always the case. 
For the general case, we observe that the only operation that may make the rebuilding process take more time is insertions performed on $S_s$. 
Thus to speed up the process, after an insertion into $S_s$, we remove the first $4$ characters (instead of $3$) from $S_s$ and append them to $S_p$. 
Hence we can always finish rebuilding after we perform at most $n_0 / 3$ update operations. 
After rebuilding, the length of the string stored in the new data structure is at least $2n_0 /3$ and at most $4n_0/3$. 
This means our approach can handle the change of $n$ smoothly, as the difference between $\lceil\lg n_0\rceil$ and $\lceil\lg n\rceil$ is at most $1$.

The above process always keeps one copy of the string $S$, and thus all the leaves occupy $n\lg\sigma + o(n)\lg\sigma$ bits. 
The {\deleteop} operation described previously however does not remove internal nodes, although it deallocates the memory for empty leaves. 
To make it possible to deallocate the memory for internal nodes fast, we use the strategy in the proof of Lemma~\ref{lem:cspsi} that stores the internal nodes in a single memory block. 
A fraction of the space in this memory block is wasted, but it still uses $o(n)$ bits. 
Thus we can deallocate the memory block for the internal nodes of the old structure in one step after we finish building the new structure. 
The same strategy can be used to maintain the CSPSI structure $E$. We also build the universal table $A_2$ incrementally as in the proof of Lemma~\ref{lem:cspsi}.

The above analysis shows that our succinct global rebuilding scheme can deamortize the support for {\insertop} and {\deleteop} without changing the space bounds. 
During the rebuilding process, we use both old and new data structures to answer queries, and it is easy to verify that we can still support each query operation in $O(\frac{\lg n}{\lg\lg n})$ time.

%The second approach is based on global rebuilding. 
%We have used this idea in the proof of Lemma~\ref{lem:stringtheta}, which is only performed on internal nodes, and thus we cannot handle the changes of $\lceil\lg n\rceil$. 
%Here we build the new tree from left to right. The new tree stores a prefix of the string, while the old tree stores a suffix. 
%More precisely, For each {\insertop} or {\deleteop} operation performed after the rebuilding process has been started, we remove the first several characters (the actual number of characters is determined as in \cite{ol1981}) from the string stored in the old tree, and insert them before the first character in the string stored in the new tree, before we actually perform this update operation. 
%When we remove characters from the old tree in the above process, we do not care about the restriction that there are no two adjacent skinny superblocks. 
%We then perform the update operation either on the new tree or on the old tree, depending on which part of the string is being updated. 
%This way we do not store two copies of the string, and we do not need the extra space for buffers. 
%Thus additional space required is $o(n)$ bits. 
%It is easy to support queries using both the new tree and the old tree. 
%As we also rebuild the leaves in this process, we can keep the value of $\lceil\lg n\rceil$ current for the new tree. 

%\section{Proof of Lemma~\ref{lem:lgnchange}}
\section{Reducing the Space Overhead when $w = \omega(\lg n)$}
\label{app:lgnchange}

To reduce the space overhead when $w = \omega(\lg n)$, we use two different strategies to maintain two different types of data. 
For the internal nodes in the B-tree, we note that the data structures constructed for them occupy $o(n)$ bits in total. 
Thus we can use the approach in the proof of Lemma~\ref{lem:cspsi} to store the internal nodes in a single memory block, and this only wastes $o(n)$ bits if each of the pointers that point to the external nodes can be encoded in  $O(\lg n)$ bits (we will show how to achieve this later). 
The same strategy also applies to the internal nodes of the B-tree for the CSPSI structure $E$. We can also use it for the blocks of the superblocks contained in the leaves of $E$; they occupy $o(n)$ bits so that we can afford wasting a constant fraction of the space. 
Since the size of each memory area allocated this way does not change for the same value of $\lceil\lg n\rceil$, we can make them adjacent in the memory to form a single memory area, and we call it the {\em fixed memory area} of the structure. 

We cannot do so for the superblocks of our string structure; they occupy $m=n\lg\sigma + o(n)\cdot\lg\sigma$ bits, so we cannot afford increasing this space cost by a constant factor. 
We use a different strategy: we divide the bits required to encode all the superblocks into $\sqrt{m/w}$ chunks of $\sqrt{mw}$ bits each. 
All the chunks are full except the last one, so only up to $\sqrt{mw}-1$ bits in these chunks may be wasted. 
We use $\sqrt{m/w}$ points of size $w$ to record the starting position of each chunks, and this costs $\sqrt{mw}$ bits. 
We divide each chunk into segments, and each segment is used to store a block of a superblock. 
Note that a segment may span up to two chunks so that all but the last chunk is full. 
To encode the starting position of a block, we only need to encode the rank of the chunk containing the first bit of the block which takes $\frac{1}{2}\lg(m/w)$ bits, and the offset within this chunk which takes $\frac{1}{2}\lg(mw)$ bits. 
Thus $O(\lg n)$ bits are sufficient to encode a pointer to a block or a superblock. 

The overall space cost of our structure is thus $n\lg\sigma + o(n)\cdot\lg\sigma + O(\sqrt{wm}) + O(w) = n\lg\sigma + o(n)\cdot\lg\sigma + O(\sqrt{wn\lg\sigma}) + O(w)$  bits. 
The third term in the above formula is $o(n)\cdot\lg\sigma$ if $w = o(n)\cdot\lg\sigma$, and it is $O(w)$ otherwise.\qed
%\end{proof}

\section{Proof of Theorem~\ref{thm:string}}
\label{app:string}

%\begin{proof}
Ferragina~\etal~\cite{fmmn2007} showed how to use a generalized wavelet tree to extend results on strings over small alphabets to general alphabets. 
Let $q = \Theta(\sqrt{\lg n})$. 
This structure is essentially a $q$-ary tree of $O(\lceil\frac{\lg \sigma}{\lg\lg n}\rceil)$ levels, and each level stores a string over alphabet of size $q$. 
This can be directly applied to the dynamic case, and because each operation on $S$ requires a constant number of operations on each level, the time required for each operation is now $O(\frac{\lg n}{\lg\lg n}(\frac{\lg \sigma}{\lg\lg n}+1))$ for the entire structure. 
The only part that is not clear is the $O(w)$-bit term in space analysis: If we have an $O(w)$-bit term at each level of the wavelet tree, the space cost will be $O(w(\frac{\lg \sigma}{\lg\lg n}+1))$ bits. 
This cost can be decreased using the approach of M{\"a}kinen and Navarro~\cite{mn2008}: 
We observe that changes of $\lceil\lg n\rceil$ occur simultaneously for the sequences stored at different levels. 
Thus we can combine all the fixed memory areas into one area, and maintain the rest of the memory together ($m$ in Appendix~\ref{app:lgnchange} is now $nH_0 + o(n)\cdot\lg\sigma$). 
This reduces such space cost to $O(w)$. \qed
%\end{proof}

\end{document}